\algnewcommand{\Initialize}[1]{%
	\Statex \textbf{Initialize:}
	\Statex \hspace*{\algorithmicindent}\parbox[t]{.8\linewidth}{\raggedright #1} }
\algnewcommand{\Forward}[1]{%
	\Statex \textbf{Forward pass:}
	\Statex \hspace*{\algorithmicindent}\parbox[t]{.8\linewidth}{\raggedright #1} } \usepackage{amsmath,amssymb}
\newcommand{\comment}[1]{{\footnotesize\color{red}#1}} 
\newcommand{\old}[1]{}
\newcommand{\new}[1]{{\color{black}#1}}
\newcommand{\neww}[1]{{\color{black}#1}}
\DeclareMathOperator{\atantwo}{atan2}
\newcommand{\neutralize}[1]{\expandafter\let\csname c@#1\endcsname\count@}
\newtheorem{lemma}{Lemma}
\newtheorem{corollary}{Corollary}[lemma]
\newtheorem{proposition}{Proposition}
\newenvironment{propositionbis}[1]
  {%
   \neutralize{proposition}\phantomsection
   \begin{proposition}}
  {\end{proposition}}
\newtheorem{definition}{Definition}
\begin{document}
\bstctlcite{IEEEexample:BSTcontrol}


\title{Optimizing Multicarrier Multiantenna Systems for LoS Channel Charting}


\author{
	Taha Yassine,
	Luc Le Magoarou,
	Matthieu Crussière,
	Stéphane Paquelet
 \thanks{Taha Yassine and Stéphane Paquelet are with b$<>$com, Rennes, France (email: \{taha.yassine ; stephane.paquelet\}@b-com.com).}
\thanks{Luc Le Magoarou and Matthieu Crussière are with INSA Rennes, CNRS, IETR - UMR 6164, F-35000, Rennes, France (email: \{luc.le-magoarou ; matthieu.crussiere\}@insa-rennes.fr).}
}

\maketitle

\begin{abstract}
	Channel charting (CC) consists in learning a mapping between the space of raw channel observations, made available from pilot-based channel estimation in multicarrier multiantenna system, and a low-dimensional space where close points correspond to channels of user equipments (UEs) close spatially. Among the different methods of learning this mapping, some rely on a distance measure between channel vectors. Such a distance should reliably reflect the local spatial neighborhoods of the UEs. The recently proposed phase-insensitive (PI) distance exhibits good properties in this regards, but suffers from ambiguities due to both its periodic and oscillatory aspects, making users far away from each other appear closer in some cases. In this paper, a thorough theoretical analysis of the said distance and its limitations \new{due to ambiguities} is provided. \new{Consequently, a new channel distance especially designed to remove ambiguities is proposed.} Guidelines for designing systems capable of learning quality charts \new{with the proposed distance} are also derived. Experimental validation is then conducted on synthetic and realistic data in different scenarios.

\end{abstract}

\begin{IEEEkeywords}
	channel charting, dimensionality reduction, MIMO signal processing, machine learning, physical model.
\end{IEEEkeywords}

\IEEEpeerreviewmaketitle

\section{Introduction}
\label{sec:introduction}
The next generation of cellular network (6G) standards will most likely be marked by a profound integration and utilization of artificial intelligence and in particular machine learning (ML) methods~\cite{Tang2019, Ali2020, Miltiadis2022, Shi2023, Merluzzi2023}. Indeed, their ability to learn complex patterns from data makes them suitable for many applications where traditional model-based approaches rapidly reach their limit, especially with the advent of massive multiple input multiple output (MIMO) systems where very large bandwidths and a large number of antennas are considered. 

User positioning, which will certainly play an important role in future wireless communication systems, is one such example where ML is expected to bring improvements over traditional approaches. Although dedicated positioning systems like global navigation satellite systems (GNSS) are capable of accurately locating users, they require dedicated hardware, and integration in traditional communication systems is not always feasible nor desired \cite{Studer2018}. They are also known for their bad performance in indoor scenarios, making them not as versatile as needed. Pure radio-based approaches where the existing communication systems are used for user positioning may thus become more attractive. Such approaches are often model-based where they rely on angle-of-arrival (AoA), time-difference-of-arrival (TDoA) and round-trip-time (RTT) measurements~\cite{Dwivedi2021}. These rather basic models work well for simple scenarios, e.g., line-of-sight (LoS) propagation, but struggle when tested on complex environments, e.g., non-line-of-sight (NLoS) propagation. They constitute a great step towards accurate and precise positioning but still fall behind in terms of the performance requirements of the next generations of wireless systems.

The today widespread adoption of massive MIMO systems and large bandwidths means higher resolvability in both the temporal and angular domains. The channel,  being dependent almost exclusively on the transmit position, is then expected to contain enough information to retrieve it. Nonetheless, hand-crafting algorithms capable of this becomes unrealistic as propagation scenarios become more complex. On the other hand, data-driven methods were successfully exploited to learn to locate UEs with great precision~\cite{Wang2015, Vieira2017, Decurninge2018, Ferrand2020}. However, these methods come with their own drawbacks as they usually require a database of labelled data often hard to acquire. This becomes even more challenging as a database needs to be provided for every new environment to train the model. This being said, although many components of a communication system may benefit from the knowledge of UEs' locations, most of them would only require a \emph{relative} positioning of the users.

Recently, channel charting (CC)~\cite{Studer2018,Ferrand2021,Rappaport2021} has emerged as an unsupervised alternative to user positioning, relying solely on channel data to pseudo-locate UEs relatively to each other. It consists in learning a mapping between channel measurements and a low dimensional space, giving rise to a chart. Points on the chart are akin to pseudo-locations in that they preserve the structure of small spatial neighborhoods rather than the absolute locations. Although less informative than absolute locations, they often convey enough information for some location-based applications such as SNR prediction~\cite{Kazemi2020}, beam management~\cite{Ponnada2021}, pilot allocation~\cite{Ribeiro2020} and precoding~\cite{Yassine2022b} among others (see~\cite{Ferrand2023} for a recent survey on the topic). 

Among CC methods, some explicitly rely on a distance measure between channels to build the chart~\cite{Agostini2020, LeMagoarou2021, Altous2022, Stahlke2023}. For them to be successful, the considered channel distance must allow to reliably detect if any two channels measurements were taken close to each other or not in a given area. In other words, the spatial neighbourhoods within the considered area should be \emph{identifiable} by the channel distance measure. One can then raise the following two questions: \emph{What distance is the most adequate for identifiability? And how can a multicarrier MIMO system be calibrated and parameterized to guarantee it?}

\noindent {\bf Contributions.} In this paper, MIMO OFDM systems are considered for the task of CC in LoS conditions. First, the notion of identifiability of an area by a channel distance is rigorously defined and its importance for CC is discussed. Then, the phase-insensitive (PI) distance introduced in~\cite{LeMagoarou2021} and used for CC in~\cite{Yassine2022b,Yassine2022a} is thoroughly analyzed, highlighting its dependency on system parameters such as antenna positions and subcarrier distribution. Conditions to achieve identifiability of an area that link it to these system parameters using the PI distance are given. Special system configurations easing identifiability and consequently particularly suited for CC are proposed. In particular, the following practical guidelines are provided:
\begin{enumerate}
\item A relationship between the radial size of the area to chart and the subcarrier spacing is given in order to avoid long-range ambiguities (Proposition~\ref{prop:nec_ULA}).
\item \new{A new channel distance is proposed, which consists in applying an appropriate threshold on the PI channel distance \cite{LeMagoarou2021}} to avoid short-range ambiguities (Proposition~\ref{prop:suff_ULA}).
\item The advantage of using uniform circular arrays (UCA) instead of ULAs for CC is shown mathematically (Proposition~\ref{prop:nec_UCA}, Proposition~\ref{prop:suff_UCA} and Proposition~\ref{prop:UCA}).
\item A relationship between the radius of the UCA and the bandwidth is given to minimize warping effects on charts (Proposition~\ref{prop:round}).
\end{enumerate}
Finally, experiments are conducted on both synthetic and realistic channels, validating all the theoretical findings of the study.



\noindent {\bf Related work.}
\new{
The initial approaches to CC relied on extracting useful features from channel measurements before effectively learning a chart. In the seminal paper~\cite{Studer2018}, features derived from the raw \nth{2} moment of channel vectors are used as inputs to manifold learning methods. Diverse strategies were then proposed in subsequent work, exploring different aspects of the problem. For example, charting methods can be categorized based on their use or not of deep learning. Work like~\cite{Studer2018,Agostini2020,Deng2018,LeMagoarou2021} rely on traditional dimensionality reduction algorithms such as Sammon's mapping (SM), principal component analysis (PCA), t-distributed stochastic neighbor embedding (t-SNE)~\cite{Maaten2008} and Isomap~\cite{Tenenbaum2000}. On the other hand, \cite{Lei2019,Ferrand2021,Huang2019,Yassine2022a} propose to use deep neural networks of different architecture (e.g. MLPs, CNNs, auto-encoders, etc.) and trained in different ways. Some rely on contrastive learning technics such as Siamese networks and triplet networks as a way to adapt to the unsupervised nature of the task.

Another way of categorizing charting methods is to consider whether or not they rely on an explicit computation of pairwise distances between channel vectors. Indeed, some work exploit the idea that finding a distance measure in the channel space that correlates well with the real spatial distances can allow for learning charts of good quality. While the Euclidean distance is not suitable for channel vectors~\cite{LeMagoarou2021}, finding a good proxy measure can be challenging. Some work relying on a channel distance include:
\begin{itemize}
	\item \cite{Ponnada2021}: the log-Euclidean distance between covariance matrices is used. The computed distance matrix is fed to Isomap and t-SNE to produce charts. 
	\item \cite{Agostini2020}: an Euclidean distance matrix completion (EDMC) approach is presented. It is based on the raw 2$^{\textrm{nd}}$ moment features introduced in~\cite{Studer2018}. A covariance distance measure (CMD) is used to identify close clusters of UEs. A neighborhood graph is constructed from the computation of this distance and only connections under a certain threshold are kept. The edges are however populated with the original Euclidean distance in the feature space and the chart obtained by solving a graph realization problem through an EDMC algorithm.
	
	\item \cite{Stahlke2023}: a multi-anchor time of flight (ToF) based distance is derived and exploited to construct the neighborhood graph. The geodesic distance resulting from it shows great correlation with the spatial Euclidean distance.
	\item \cite{Altous2022}: an earth mover's distance (EMD) applied to super-resolution channel features is introduced and shown to outperform other considered covariance distances (i.e. distances defined on the space of covariance matrices, such as the log-Euclidean distance uses in~\cite{Ponnada2021}).
	\item \cite{LeMagoarou2021}: the proposed PI distance eliminates the effect of the global phase, making it insensitive to small scale fading and more adequate to CC. Isomap is used to produce the final chart.
	\item \cite{Stephan2023}: the angle-delay profile (ADP) distance is proposed. It is inspired by both the PI distance~\cite{LeMagoarou2021} and the ToF-based distance~\cite{Stahlke2023}. It exploits the fact that channel vectors are sparse in the time domain and that most of the power is concentrated around the LoS component.
\end{itemize}
}

This paper is (to the best of the authors' knowledge) the first to propose system configurations explicitly targeted at easing CC. To do so, the PI channel distance is considered~\cite{LeMagoarou2021}, but others could have been also considered.

Although this paper doesn't explore the communication capabilities of the designed system, it could nonetheless be framed as an ISAC problem~\cite{Wymeersch2021}. Indeed, this paper focuses on adapting an existing communication system to CC, which can be seen as a form of sensing. Similarly to~\cite{MateosRamos2022}, the approach presented here could be easily extended to account for a trade-off between communication and CC.

\new{
	\noindent {\bf Notation.} Lowercase boldface letters and uppercase boldface letters denote column vectors and matrices, respectively. For a matrix $\mathbf{A}$, we denote its transpose by $\mathbf{A}^\top$ and its conjugate transpose by $\mathbf{A}^H$. For a vector $\mathbf{a}$, its $l^2$-norm is denoted by $\Vert\mathbf{a}\Vert_2$. A spatial vector of dimension two or three is denoted by $\vec{a}$. The Kronecker product of two matrices $\mathbf{A}$ and $\mathbf{B}$ is denoted by $\mathbf{A}\otimes\mathbf{B}$.
}
\section{Problem formulation}

In this section, the considered setting is introduced, the CC task is briefly presented and the notion of identifiability is properly defined and discussed.

\subsection{System and channel model}

Let us consider a base station (BS) equipped with $N_a$ antennas, located at positions $\vec{p}_1,\dots,\vec{p}_{N_a}$ relative to the barycenter of the antennas, which is considered as the origin. The BS communicates with single antenna UEs over $N_s$ evenly spaced OFDM subcarriers $f_1,\dots,f_{N_s}$. Let $\Delta_f$ denote the subcarrier spacing, $B\triangleq N_s\Delta_f$ denote the total bandwidth, $f_c \triangleq \frac{1}{N_s}\sum_i f_i$ the central frequency and $\lambda \triangleq \frac{c}{f_c}$ the central wavelength.

In order to make the theoretical analysis of the paper clear and interpretable, let us consider a two dimensional (2D) configuration in which the BS antennas and the users are at the same height (all in the same plane) and free space propagation. In that case, the channels have a single LoS path, so that the uplink channel between an user at location $\vec{x}=(r,\theta)$ (in polar coordinates) and the $n$th antenna of the BS on the $m$th subcarrier can be expressed as
\begin{equation}
	h_{mn}(\vec{x}) = h_{mn}(r,\theta) =\frac{1}{r} \mathrm{e}^{-\mathrm{j}\frac{2\pi}{\lambda} \left(r\frac{f_m}{f_c} - \vec{p}_n.\vec{u}(\theta)\right)},
\end{equation}
where $\vec{u}(\theta) = (\cos(\theta),\sin(\theta))$ (in cartesian coordinates) is the unit vector pointing towards the location $\vec{x}$. The complete channel \new{for a given single UE} considering all BS antennas and subcarrier can then conveniently be expressed in vector form as 
\begin{equation}
	\mathbf{h}(\vec{x})=\frac{\sqrt{N_aN_s}}{r}\mathrm{e}^{-\mathrm{j}2\pi \frac{r}{\lambda}}\mathbf{f}(r)\otimes\mathbf{a}(\theta) \in \mathbb{C}^{N_aN_s},
	\label{eq:channel_model}
\end{equation}
where $\mathbf{f}(r)$ and $\mathbf{a}(\theta)$ are the \emph{frequency signature vector} and the \emph{steering vector} respectively, defined as
\begin{equation}
	\mathbf{f}(r)\triangleq\frac{1}{\sqrt{N_s}}\left(\mathrm{e}^{-\mathrm{j}2\pi\frac{r}{c}(f_1-f_c)},\dots,\mathrm{e}^{-\mathrm{j}2\pi\frac{r}{c}(f_{N_s}-f_c)} \right)^\top,
\end{equation}
and
\begin{equation}
	\label{eq:steer_vec}
	\mathbf{a}(\theta)\triangleq\frac{1}{\sqrt{N_a}}(\mathrm{e}^{\mathrm{j}\frac{2\pi}{\lambda}\vec{p}_1.\vec{u}(\theta)},\dots,\mathrm{e}^{\mathrm{j}\frac{2\pi}{\lambda}\vec{p}_{N_a}.\vec{u}(\theta)})^\top.
\end{equation}
The latter is in generic form and encompasses any 2D antenna array configurations. In particular, for a uniform linear array (ULA), $\vec{p}_n$ is given as $(0, (n-1)\Delta_r\lambda)$ where $\Delta_r$ is the normalized antenna separation. Similarly, for a uniform circular array (UCA), although less common, $\vec{p}_n$ is given as $(R\cos(\frac{2\pi(n-1)}{N_a}), R\sin(\frac{2\pi(n-1)}{N_a}))$ where $R$ is the radius of the array.

\subsection{Channel charting}
Let us denote $M = N_aN_s$ the considered channel dimension. CC aims at learning a function $\mathcal{F}$ that maps the channel vectors of UEs in an area of interest $A$ to a low-dimensional space of dimension $D \ll M$,
\begin{align*}
	\mathcal{F}:\; \mathbb{C}^{M} & \to\mathbb{R}^D                                     \\
	\mathbf{h}                & \mapsto\mathcal{F}(\mathbf{h})=\mathbf{\mathbf{z}},
\end{align*}
where $\mathbf{z}$ is the chart location associated to $\mathbf{h}$. CC methods are assessed on their ability to preserve spatial neighborhoods, that is, how well the original spatial distances match the distances on the learned chart for close points
\begin{equation}
	\Vert \vec{x}_i - \vec{x}_j \Vert_2 \approx \Vert \mathbf{z}_i - \mathbf{z}_j \Vert_2.
\end{equation}
The ability to preserve the global geometry can also be considered.
Performance measures such as continuity (CT), trustworthiness (TW) and Kruskal stress (KS) are used to assess the quality of such charts (see~\cite{Lei2019} for a formal definition). They are calibrated using a database of channel measures at different locations
$$\{\mathbf{h}_i\}_{i=1}^{N_c} = \{\mathbf{h}(\vec{x}_i)\}_{i=1}^{N_c},$$ 
$N_c$ being the number of measured channels (training samples), and the two notations can be used indifferently, depending on the wish for brevity or explicitness.

This paper focuses on charting methods based on a distance measure $d : \mathbb{C}^{M} \times \mathbb{C}^{M} \to \mathbb{R}_+$ between pairs of channel vectors (or features derived from it) that tries to match the Euclidean distances between the real locations as closely as possible such that
\begin{equation}
	\Vert \vec{x}_i - \vec{x}_j \Vert_2 \approx d(\mathbf{h}_i,\mathbf{h}_j).
\end{equation}
Such a distance is then used to produce the chart. In general, a distance matrix between all pairs of training channels is computed:
$$
\mathbf{D} \in \mathbb{R}^{N_c \times N_c}, \text{ with } d_{ij} = d(\mathbf{h}_i,\mathbf{h}_j),
$$
and subsequently used as the input to a dimensionality reduction algorithm, such as Isomap or EDMC + ADMM~\cite{Agostini2020}, capable of producing the final chart.

\subsection{Identifiability}

Using an appropriate distance measure is necessary to build good charts. In particular, the distance should allow to detect spatial neighborhoods. In order to make this statement more precise, let us define the identifiability of a set of channel observations by a distance.

\begin{definition}[Strong Identifiability]
	\label{def:strong_iden}
	A set of channel observations and their corresponding locations $S\subset\mathbb{R}^{2}$ are said to be strongly identifiable by the channel distance $d$ iff $\forall \vec{x},\vec{y},\vec{z}\in S$
	\begin{align*}
		d(\mathbf{h}( \vec{y}),\mathbf{h}( \vec{x})) > d(\mathbf{h}( \vec{z}),\mathbf{h}( \vec{x}))
		\Leftrightarrow \lVert\vec{y}-\vec{x}|\rVert_2 > \lVert\vec{z}-\vec{x}|\rVert_2
	\end{align*}
\end{definition}

This definition is described as \emph{strong} because it amounts to finding a channel distance bijective with the Euclidean distance. Since achieving this is  often challenging, a \emph{weak} version of the problem is presented.
\begin{definition}[Weak Identifiability]
	\label{def:weak_iden}

	A set of channel observations and their corresponding locations $S\subset\mathbb{R}^{2}$ is said to be weakly identifiable by the channel distance $d$ iff $\forall \vec{x},\vec{y},\vec{z}\in S$
\begin{align*}
	d(\mathbf{h}( \vec{y}),\mathbf{h}( \vec{x})) > d(\mathbf{h}( \vec{z}),\mathbf{h}( \vec{x}))\\
	\Rightarrow |r_y-r_x| > |r_z-r_x|\ \textrm{or}\ |\theta_y-\theta_x| > |\theta_z-\theta_x|
\end{align*}
\end{definition}
This weak version of identifiability is defined in terms of the polar coordinates to better suit the mathematical model underlying the propagation channels. Obviously, strong identifiability implies weak identifiability, so that weak identifiability is necessary for strong identifiability.

For a CC method based on a distance $d$ to perform well on an area $A$, identifiability is an important pre-requisite. Indeed, it is impossible for the method to preserve spatial neighborhoods (which is its objective) if the channel distance measure it is based on already fails to do so. In the subsequent parts of the paper, the distance measure introduced in~\cite{LeMagoarou2021} is analyzed, leading to the statement of system setting guidelines and a thresholding procedure in order to favor 
 weak identifiability. For two UEs communicating with the BS with corresponding channel vectors $\mathbf{h}_i$ and $\mathbf{h}_j$, this distance is defined as
\begin{equation}
	d^{\star}(\mathbf{h}_i,\mathbf{h}_j)\triangleq\sqrt{2-2s^{\star}(\mathbf{h}_i,\mathbf{h}_j)},
	\label{eq:PI_distance}
\end{equation}
where $s^{\star}(\mathbf{h}_i,\mathbf{h}_j)\triangleq\frac{\vert\mathbf{h}_i^H\mathbf{h}_j\vert}{\Vert\mathbf{h}_i\Vert_2\Vert\mathbf{h}_j\Vert_2}$ is a similarity measure.

The first step of the analysis of the distance is to notice that it can nicely be decomposed into a radial and an angular term. Indeed, when injecting \eqref{eq:channel_model} one can highlight the following decomposition:
\begin{align}
	\begin{split}
		\frac{\vert\mathbf{h}_i^H\mathbf{h}_j\vert}{\Vert\mathbf{h}_i\Vert_2\Vert\mathbf{h}_j\Vert_2}&=\left\vert \left(\mathbf{f}(r_i)\otimes\mathbf{a}(\theta_i)\right)^H\left(\mathbf{f}(r_j)\otimes\mathbf{a}(\theta_j)\right) \right\vert\\
		&\overset{(a)}{=}\left\vert \left(\mathbf{f}(r_i)^H\otimes\mathbf{a}(\theta_i)^H\right)\left(\mathbf{f}(r_j)\otimes\mathbf{a}(\theta_j)\right) \right\vert\\&\overset{(b)}{=}\left\vert \left(\mathbf{f}(r_i)^H\mathbf{f}(r_j)\right)\otimes\left(\mathbf{a}(\theta_i)^H\mathbf{a}(\theta_j)\right) \right\vert\\
		&=\underbrace{\left\vert \left(\mathbf{f}(r_i)^H\mathbf{f}(r_j)\right)\right\vert}_{\bar{f}(r_i,r_j)}\times\underbrace{\left\vert\left(\mathbf{a}(\theta_i)^H\mathbf{a}(\theta_j)\right)\right\vert}_{\bar{a}(\theta_i,\theta_j)},\\
	\end{split}
	\label{eq:s}
\end{align}
where $(a)$ and $(b)$ are because of the transpose property and the mixed-product property of the Kronecker product respectively. This leads to a nice expression of $s^{\star}(\mathbf{h}_i,\mathbf{h}_j)$ as the product of two terms: $\bar{f}(r_i,r_j)$ and $\bar{a}(\theta_i,\theta_j)$. This decomposition in terms of radial and angular component explains why weak identifiability is studied here.

Even if such distance measure is phase insensitive, it is still subject to ambiguities that prohibit the establishment of unique local neighborhoods. Indeed, when moving away from a given reference point in an arbitrary direction, the value of the distance oscillates, meaning that points on that line appear successively farther then closer then farther again with varying amplitudes, creating aliases. In particular, these ambiguities arise from two phenomena: a long-range one and a short-range one. Combined, they create false neighborhoods that affect the trustworthiness of the measure. On one hand, the long-range ambiguities are caused by the periodic nature of the function, meaning that local neighborhoods are duplicated indefinitely in a periodic manner. A simple solution to go around this problem is to only consider users in a area smaller or equal to one period of the function. System parameters could consequently be adjusted to obtain an area of the desired size. One the other hand, the short-range ambiguities are caused by the oscillatory nature of the function inside one period. Fortunately, the amplitudes of the oscillations decrease monotonically, meaning that applying a threshold to only keep the main lobe is feasible and would allow the elimination of the remaining aliases. These considerations are made mathematically rigorous in the next section assuming a ULA for the antenna array.

\section{Limitations of channel charting with a ULA}
\label{sec:ULA}

\new{In this section, the terms of the decomposition \eqref{eq:s} are further developed in the classical case of a base station equipped with a ULA to highlight the limitations of this particular antenna geometry for channel charting.} Note that uniform planar arrays (UPAs) could similarly be analyzed in an extension to the 3D case but are not considered here.

Before proceeding, a review of some classical results~\new{\cite{Proakis2009}} is necessary as proposed in the following Lemmas.

\begin{lemma}
	\label{lem:lemma_1}
	In an OFDM system with evenly spaced subcarriers, the radial term is of the form
	\begin{equation}
		\bar{f}(r_i,r_j)=\left\vert D_{N_s}\left(\frac{2\pi B(r_i-r_j)}{c}\right)\right\vert,
	\end{equation}
	where $D_{N_s}$ is the Dirichlet kernel~\new{\cite{Bruckner1997}} defined as $D_N(x)=\frac{1}{N}\frac{\sin(\frac{x}{2})}{\sin(\frac{x}{2N})}$. It is periodic of period $D_{\bar{f}}=\frac{c}{\Delta_f}$ provided that the absolute delays of both channel observations (i.e., $\tau_i$ and $\tau_j$) are known. Additionally, it contains a main lobe of width $L_{\bar{f}}=\frac{2c}{B}$ centered at $r_i=r_j$.
\end{lemma}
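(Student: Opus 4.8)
The plan is to expand the Hermitian inner product $\mathbf{f}(r_i)^H\mathbf{f}(r_j)$ directly from the definition of the frequency signature vector, recognise it as a geometric sum, and identify its modulus with the Dirichlet kernel. First I would substitute the definition of $\mathbf{f}$ to obtain
\begin{equation*}
\mathbf{f}(r_i)^H\mathbf{f}(r_j)=\frac{1}{N_s}\sum_{m=1}^{N_s}\mathrm{e}^{-\mathrm{j}2\pi\frac{r_j-r_i}{c}(f_m-f_c)}.
\end{equation*}
The key structural input is the even spacing of the subcarriers: writing $f_m=f_1+(m-1)\Delta_f$ and using that $f_c$ is their arithmetic mean yields $f_m-f_c=\Delta_f\bigl(m-\tfrac{N_s+1}{2}\bigr)$. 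Setting $\alpha\triangleq 2\pi\Delta_f(r_j-r_i)/c$, each summand becomes $\mathrm{e}^{-\mathrm{j}\alpha(m-(N_s+1)/2)}$, and the centring offset $\tfrac{N_s+1}{2}$ factors out as a unimodular constant $\mathrm{e}^{\mathrm{j}\alpha(N_s+1)/2}$ that disappears under $\vert\cdot\vert$.

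Next I would sum the geometric series $\sum_{m=1}^{N_s}\mathrm{e}^{-\mathrm{j}\alpha m}$ and take its modulus, invoking the standard identity $\bigl\vert\sum_{m=1}^{N}\mathrm{e}^{-\mathrm{j}\alpha m}\bigr\vert=\bigl\vert\sin(N\alpha/2)/\sin(\alpha/2)\bigr\vert$. Dividing by $N_s$ and substituting $N_s\alpha=2\pi B(r_j-r_i)/c$, valid because $B=N_s\Delta_f$, gives exactly $\bigl\vert D_{N_s}\bigl(2\pi B(r_j-r_i)/c\bigr)\bigr\vert$; since $D_{N_s}$ is an even function this equals $\bar{f}(r_i,r_j)$ as stated.

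The two remaining claims follow by elementary inspection. For the period, the summand above is invariant under $\alpha\mapsto\alpha+2\pi$, which in terms of the range difference is a shift of $c/\Delta_f$, hence $D_{\bar f}=c/\Delta_f$. For the main lobe, $D_{N_s}$ takes the value $1$ at $r_i=r_j$, and its first zeros occur where the numerator $\sin(x/2)$ vanishes while the denominator does not, i.e. at $x=2\pi$, giving $r_i-r_j=\pm c/B$ and therefore a width $L_{\bar f}=2c/B$.

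The computation is essentially routine; the only points demanding care are the centring step, where one must verify that the linear phase arising from subtracting the mean frequency $f_c$ is unimodular and hence irrelevant under the modulus, and the correct location of the first Dirichlet zeros so as to recover the main-lobe width rather than the full period. The proviso that the absolute delays $\tau_i,\tau_j$ be known is a modelling assumption guaranteeing that the measured per-subcarrier phases faithfully encode the true ranges $r=c\tau$ with no unknown common timing offset; under it the derivation holds verbatim.
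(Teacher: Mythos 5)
Your proposal is correct and follows essentially the same route as the paper's Appendix A: expand $\mathbf{f}(r_i)^H\mathbf{f}(r_j)$, use the even subcarrier spacing to factor out a unimodular phase and reduce to a geometric sum, then identify the modulus with the Dirichlet kernel via $B=N_s\Delta_f$. Your added inspection arguments for the period $c/\Delta_f$ and the first-zero location $x=2\pi$ (hence $L_{\bar f}=2c/B$) are sound and in fact make explicit two claims the paper's proof leaves to the reader.
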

\begin{proof}
	See Appendix~\ref{app:lemma_1}.
\end{proof}


\begin{lemma}
	\label{lem:lemma_2}
	With a ULA, the angular term is of the form
	\begin{equation}
		\label{eq:second_term}
		\bar{a}(\theta_i,\theta_j)=\left\vert D_{N_a}(2\pi\Delta_r N_a (\Theta_i-\Theta_j))\right\vert,
	\end{equation}
	where $\Theta=\sin\theta$. It is of period $D_{\bar{a}}=\frac{1}{\Delta_r}$ w.r.t. $\Theta$. It contains a main lobe centered at $\Theta_i=\Theta_j$ of width $\frac{2}{\Delta_r N_a}$. In addition, it is symmetrical w.r.t. $\theta= \frac{\pi}{2}\;[\pi]$.
\end{lemma}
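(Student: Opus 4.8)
The plan is to compute the angular term $\bar{a}(\theta_i,\theta_j)=\vert\mathbf{a}(\theta_i)^H\mathbf{a}(\theta_j)\vert$ directly from the definition of the steering vector in \eqref{eq:steer_vec}, specialised to the ULA geometry. First I would substitute $\vec{p}_n=(0,(n-1)\Delta_r\lambda)$ so that $\vec{p}_n.\vec{u}(\theta)=(n-1)\Delta_r\lambda\sin\theta$, which makes the $n$th entry of $\mathbf{a}(\theta)$ equal to $\frac{1}{\sqrt{N_a}}\mathrm{e}^{\mathrm{j}2\pi(n-1)\Delta_r\sin\theta}$. Writing $\Theta=\sin\theta$ and forming the Hermitian product, the central wavelength cancels and the inner product collapses to a single geometric series $\frac{1}{N_a}\sum_{n=0}^{N_a-1}\mathrm{e}^{\mathrm{j}2\pi n\Delta_r(\Theta_j-\Theta_i)}$ in the phase $\phi\triangleq 2\pi\Delta_r(\Theta_j-\Theta_i)$. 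This is the angular analogue of the reduction carried out for the radial term in Lemma~\ref{lem:lemma_1}.

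Next I would sum the geometric series in closed form and factor out the linear-phase term $\mathrm{e}^{\mathrm{j}(N_a-1)\phi/2}$, which has unit modulus and therefore disappears under the absolute value. What remains is $\frac{1}{N_a}\vert\sin(N_a\phi/2)/\sin(\phi/2)\vert$. Matching this against the Dirichlet kernel $D_N(x)=\frac{1}{N}\frac{\sin(x/2)}{\sin(x/(2N))}$ under the substitution $x=N_a\phi$ yields exactly $\vert D_{N_a}(N_a\phi)\vert=\vert D_{N_a}(2\pi\Delta_r N_a(\Theta_i-\Theta_j))\vert$, where the sign of the argument can be dropped since $D_{N_a}$ is even. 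This establishes \eqref{eq:second_term}.

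For the period and main-lobe width, I would argue on the complex exponential sum rather than on the kernel. The map $\phi\mapsto\frac{1}{N_a}\sum_n\mathrm{e}^{\mathrm{j}n\phi}$ is manifestly $2\pi$-periodic in $\phi$, and since $\phi=2\pi\Delta_r(\Theta_i-\Theta_j)$ this corresponds to invariance under $\Theta\mapsto\Theta+1/\Delta_r$, giving $D_{\bar{a}}=1/\Delta_r$; taking the modulus preserves this period. The main-lobe width then follows from locating the first zeros of the numerator $\sin(N_a\phi/2)$, which occur at $\phi=\pm 2\pi/N_a$, i.e. at $\Theta_i-\Theta_j=\pm 1/(\Delta_r N_a)$, so the lobe centred at $\Theta_i=\Theta_j$ has width $2/(\Delta_r N_a)$. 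Finally, the reflection symmetry about $\theta=\frac{\pi}{2}\;[\pi]$ is inherited from the fact that $\bar{a}$ depends on $\theta$ only through $\Theta=\sin\theta$: since $\sin(\frac{\pi}{2}+\delta)=\sin(\frac{\pi}{2}-\delta)$ and likewise $\sin(-\frac{\pi}{2}+\delta)=\sin(-\frac{\pi}{2}-\delta)$, every $\theta\equiv\frac{\pi}{2}\;[\pi]$ is a mirror axis for $\Theta$, hence for $\bar{a}$.

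The routine algebra (geometric sum and linear-phase factoring) parallels the radial computation, so I expect the only delicate point to be the periodicity claim. The cleanest justification is to read the period off the complex exponential sum, which is $2\pi$-periodic in $\phi$ and therefore $\frac{1}{\Delta_r}$-periodic in $\Theta$ irrespective of the parity of $N_a$, rather than off the signed Dirichlet kernel, whose apparent period in $\phi$ is parity-dependent and only collapses to $2\pi$ once the modulus is taken.
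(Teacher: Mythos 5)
Your proposal is correct and follows essentially the same route as the paper's Appendix~B proof: specialise the steering vector to the ULA geometry, reduce the inner product to a geometric series in $2\pi\Delta_r(\Theta_j-\Theta_i)$, and identify the closed-form sum with the Dirichlet kernel. The paper's appendix stops at the kernel identity and leaves the period, main-lobe width and symmetry claims as classical facts, whereas you also verify them (correctly, including the worthwhile observation that the $2\pi$-periodicity in $\phi$ is cleanest to read off the exponential sum rather than the signed kernel), so your write-up is if anything slightly more complete.
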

\begin{proof}
	See Appendix~\ref{app:lemma_2}.
\end{proof}

When taken w.r.t. $\theta$, the $\sin$ in~\eqref{eq:second_term} makes the periodic pattern of varying width in the intervals $[-\frac{\pi}{2},\frac{\pi}{2}]+\pi\mathbb{Z}$.
When the pattern is of width $\pi$, it covers the whole interval, leading to the following corollary:

\begin{corollary}
	\label{cor:cor_21}
	If the ULA antennas are separated by half the wavelength (i.e., $\Delta_r=\frac{1}{2}$), $\bar{a}$ contains a single main lobe $\forall\theta_i,\theta_j\in[-\frac{\pi}{2},\frac{\pi}{2}]$. For a given $\theta_i\in[-\frac{\pi}{2},\frac{\pi}{2}]$, the main lobe extends from $\arcsin(\sin\theta_i-\frac{1}{N_a\Delta_r})$ to $\arcsin(\sin\theta_i+\frac{1}{N_a\Delta_r})$ when $-1+\frac{1}{N_a\Delta_r}\leq\sin(\theta_i)\leq 1-\frac{1}{N_a\Delta_r}$, otherwise it is split in two. For $\theta_i=0$, its width is equal to $L_a=2\arcsin(\frac{1}{N_a\Delta_r})$ (in rad).
\end{corollary}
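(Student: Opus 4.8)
The plan is to specialize Lemma~\ref{lem:lemma_2} to the case $\Delta_r=\tfrac12$ and then translate every statement, originally expressed in the sine variable $\Theta=\sin\theta$, back to the physical angle $\theta$ through the strictly increasing bijection $\theta\mapsto\sin\theta$ from $[-\tfrac{\pi}{2},\tfrac{\pi}{2}]$ onto $[-1,1]$. Recall from Lemma~\ref{lem:lemma_2} that, viewed as a function of $\Theta_j$ at fixed $\Theta_i$, the term $\bar a$ is $\tfrac{1}{\Delta_r}$-periodic and has its main lobe centred at $\Theta_j=\Theta_i$ with half-width $\tfrac{1}{N_a\Delta_r}$. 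Setting $\Delta_r=\tfrac12$ makes the period equal to $2$. Since $\theta\in[-\tfrac{\pi}{2},\tfrac{\pi}{2}]$ forces $\Theta=\sin\theta\in[-1,1]$, the admissible range of $\Theta$ is an interval of length exactly $2$, i.e. one full period. I would then argue that only the central lobe (at $\Theta_j=\Theta_i$) can lie inside $[-1,1]$, because its nearest aliases sit at $\Theta_i\pm 2$, which fall outside the open interval $(-1,1)$ for every $\Theta_i\in(-1,1)$. This establishes the single-main-lobe claim.

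Next I would describe the lobe in the $\theta_j$ variable. On $[-\tfrac{\pi}{2},\tfrac{\pi}{2}]$ the inverse of $\sin$ is $\arcsin$, so the lobe interval $\Theta_j\in[\Theta_i-\tfrac{1}{N_a\Delta_r},\,\Theta_i+\tfrac{1}{N_a\Delta_r}]$ maps to $\theta_j\in[\arcsin(\sin\theta_i-\tfrac{1}{N_a\Delta_r}),\,\arcsin(\sin\theta_i+\tfrac{1}{N_a\Delta_r})]$, the endpoints being well defined precisely when both arguments remain in $[-1,1]$, i.e. when $-1+\tfrac{1}{N_a\Delta_r}\le\sin\theta_i\le 1-\tfrac{1}{N_a\Delta_r}$. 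This is exactly the stated interior regime.

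The subtle step, which I expect to be the main obstacle, is the boundary regime where $\sin\theta_i$ leaves the above range (i.e. $\theta_i$ approaches $\pm\tfrac{\pi}{2}$). There the lobe interval in $\Theta$ overshoots the endpoint $+1$ (or $-1$); by the $2$-periodicity, the overshooting part reappears near the opposite endpoint as the tail of the alias centred at $\Theta_i\mp 2$. Intersecting with the admissible set $[-1,1]$ therefore produces two disjoint fragments, one abutting $\Theta=+1$ and one abutting $\Theta=-1$, which is precisely the announced ``split in two''. Care is needed here to justify that these two fragments are two parts of the \emph{same} lobe seen through periodicity, rather than two genuinely distinct lobes, and to handle the limiting cases $\theta_i=\pm\tfrac{\pi}{2}$ cleanly.

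Finally, for $\theta_i=0$ one has $\Theta_i=0$, so the lobe is the symmetric interval $[-\tfrac{1}{N_a\Delta_r},\,\tfrac{1}{N_a\Delta_r}]$, which lies inside $[-1,1]$. Mapping back through $\arcsin$, an odd function, gives a width $\arcsin(\tfrac{1}{N_a\Delta_r})-\arcsin(-\tfrac{1}{N_a\Delta_r})=2\arcsin(\tfrac{1}{N_a\Delta_r})=L_a$, which closes the proof.
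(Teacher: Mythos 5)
Your proposal is correct and follows essentially the same route as the paper: specialize Lemma~\ref{lem:lemma_2} to $\Delta_r=\tfrac12$ so that the period in $\Theta=\sin\theta$ equals the length of the admissible interval $[-1,1]$, then map the lobe back to the angular domain via $\arcsin$. The paper's own proof is a one-line appeal to this $\arcsin$ conversion, so your write-up simply supplies the details (single-lobe argument, the split-in-two wrap-around, and the width at $\theta_i=0$) that the paper leaves implicit or defers to the discussion following the corollary.
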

\begin{proof}
	The proof follows from Lem.~\ref{lem:lemma_2} where the $\arcsin$ is applied to the width of the main lobe to convert it to the angular domain.
\end{proof}

In the case where the main lobe is split in two:
\begin{enumerate}
	\item if $\sin(\theta_i)\leq-1+\frac{1}{N_a\Delta_r}$, one part extends over the span of $[-\frac{\pi}{2},\arcsin(\sin\theta_i+\frac{1}{N_a\Delta_r})]$ and the other over the span of $[\arcsin(\sin(\frac{\pi}{2}-\theta_i)-\frac{1}{N_a\Delta_r}),\frac{\pi}{2}]$;
	\item if $1-\frac{1}{N_a\Delta_r}\leq\sin(\theta_i)$, one part extends over the span of $[-\frac{\pi}{2},\arcsin(\sin(-\frac{\pi}{2}+\theta_i)+\frac{1}{N_a\Delta_r})]$ and the other over the span of $[\arcsin(\sin\theta_i-\frac{1}{N_a\Delta_r}),\frac{\pi}{2}]$.
\end{enumerate}

In the remainder of this paper, ULAs with half a wavelength separated antennas are considered. 

Fig.~\ref{fig:func_plot_ULA} show a 2D color plot of~\eqref{eq:s} along with plots of $\bar{f}$ and $\bar{a}$ for some arbitrary system parameters and for a reference user with associated channel $\mathbf{h}_\textrm{ref}$ and position $(r_\textrm{ref},\theta_\textrm{ref})$. It highlights the limitations of the current form of the distance measure that is, \emph{from its point of view, 2 UEs far away from each other can appear closer due to its periodic and oscillatory nature}. Specifically, for a given reference UE, a UE \emph{A} located on a local minimum of $s^{\star}$ will be considered farther than a UE \emph{B} located on a local maximum even if \emph{A} is closer than \emph{B} in reality (i.e. according to the Euclidean distance between the UEs' locations).

\begin{figure}
	\includegraphics[width=\columnwidth]{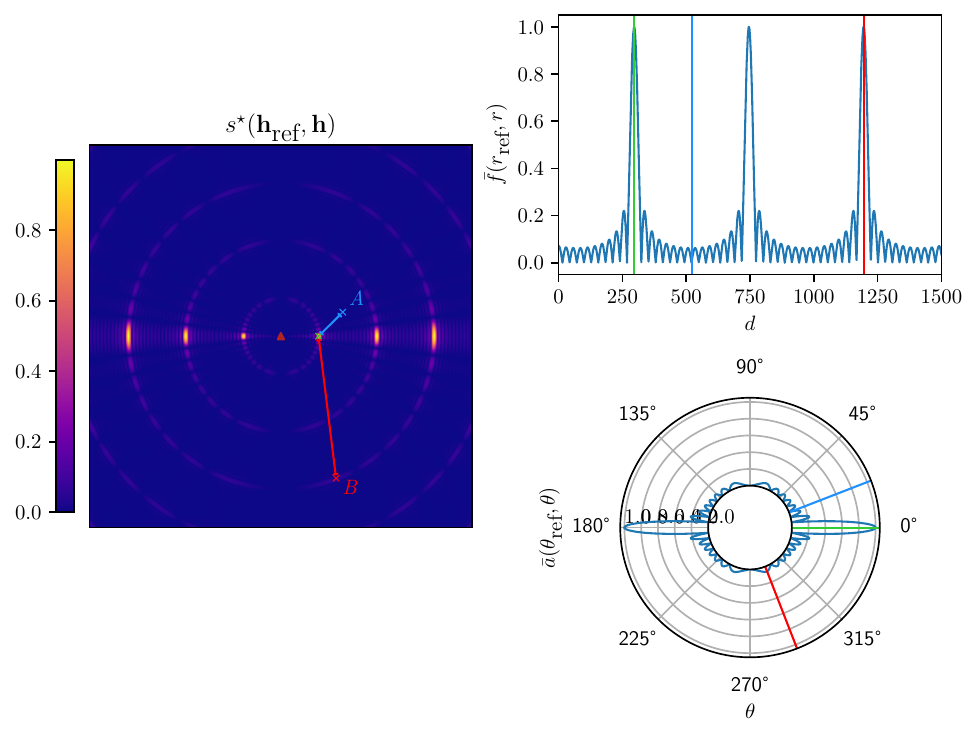}
	\caption{The left figure shows a plot of the similarity $s^\star(\mathbf{h}_\textrm{ref},\mathbf{h})$ between the reference user (green cross and lines) and the rest of the points on the map. The BS is placed at its center. The top right figure shows a plot of $\bar{f}(r_\textrm{ref},r)$ and the bottom right one shows a plot of $\bar{a}(\theta_\textrm{ref},\theta)$. User A (blue cross and lines) is spatially closer to the reference user than user B (red cross/lines), but the distance measure makes the latter appear closer than the former. \new{The objective of this paper is to study theoretically these ambiguities, and propose a new channel distance and system settings that remove them.}}
	\label{fig:func_plot_ULA}
\end{figure}

To deal with the ambiguities mentioned above, system parameters should be chosen guided by the results of Lemma~\ref{lem:lemma_1} and Corollary~\ref{cor:cor_21} with regards to Definition~\ref{def:weak_iden}. 

\noindent\fbox{
\begin{minipage}[t]{0.95\columnwidth}
\begin{proposition}
\label{prop:nec_ULA}
 \textbf{Necessary identifiability condition for ULA.} 
A necessary condition for UEs in an area $A$ to be weakly identifiable by $d^{\star}$, is:
\begin{enumerate}
	\item the radial size $R$ of $A$ should satisfy $R \leq c(\frac{1}{\Delta_f}-\frac{1}{B}) \approx \frac{c}{\Delta_f}$,
	\item for a given $\theta\in[\arcsin(-\frac{1}{N_a}),\arcsin(\frac{1}{N_a})]$, the angular spread of $A$ should remain in the interval $[\arcsin(\sin(\theta)+\frac{N_a-1}{N_a}),\arcsin(\sin(\theta)-\frac{N_a-1}{N_a})]$,
\end{enumerate}
\end{proposition}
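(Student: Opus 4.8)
The plan is to prove necessity by contraposition: for each of the two conditions I assume it fails and construct three points $\vec{x},\vec{y},\vec{z}\in A$ that violate weak identifiability (Definition~\ref{def:weak_iden}). Since $d^{\star}=\sqrt{2-2s^{\star}}$ is a strictly decreasing function of $s^{\star}$, the inequality $d^{\star}(\mathbf{h}(\vec{y}),\mathbf{h}(\vec{x}))>d^{\star}(\mathbf{h}(\vec{z}),\mathbf{h}(\vec{x}))$ is equivalent to $s^{\star}(\mathbf{h}(\vec{y}),\mathbf{h}(\vec{x}))<s^{\star}(\mathbf{h}(\vec{z}),\mathbf{h}(\vec{x}))$. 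A violation therefore consists of a point $\vec{y}$ that is no farther than $\vec{z}$ in \emph{both} the radial and angular coordinates yet has strictly smaller similarity. The key simplification is the product decomposition \eqref{eq:s}, $s^{\star}=\bar{f}(r_i,r_j)\,\bar{a}(\theta_i,\theta_j)$: by placing the three points on a common ray (common $\theta$) the similarity collapses to the radial factor $\bar{f}$, and by placing them on a common circle (common $r$) it collapses to the angular factor $\bar{a}$. This decouples the two conditions.

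For the radial condition, fix a common angle so that $s^{\star}=\bar{f}$. By Lemma~\ref{lem:lemma_1}, $\bar{f}$ is periodic of period $D_{\bar{f}}=\frac{c}{\Delta_f}$ with a main lobe of width $L_{\bar{f}}=\frac{2c}{B}$ centered at $r_i=r_j$. The periodic replica of this main lobe is centered at radial separation $D_{\bar{f}}$, so its inner (rising) flank starts at separation $D_{\bar{f}}-\frac{L_{\bar{f}}}{2}=c\left(\frac{1}{\Delta_f}-\frac{1}{B}\right)$, where $\bar{f}$ vanishes and then increases back towards $1$. If $R>c\left(\frac{1}{\Delta_f}-\frac{1}{B}\right)$, I place $\vec{x}$ at the inner radial end of $A$, $\vec{y}$ at separation exactly $c\left(\frac{1}{\Delta_f}-\frac{1}{B}\right)$ (a zero of $\bar{f}$), and $\vec{z}$ at a slightly larger separation still inside $A$ on that rising flank, so that $\bar{f}(r_z,r_x)>\bar{f}(r_y,r_x)=0$. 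Then $\vec{z}$ is strictly farther radially (and equidistant angularly) yet strictly more similar, contradicting weak identifiability; hence $R\le c\left(\frac{1}{\Delta_f}-\frac{1}{B}\right)$ is necessary.

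For the angular condition, fix a common radius so that $s^{\star}=\bar{a}$. By Lemma~\ref{lem:lemma_2} and Corollary~\ref{cor:cor_21}, in the variable $\Theta=\sin\theta$ the factor $\bar{a}$ is again a Dirichlet kernel, periodic of period $\frac{1}{\Delta_r}$ with a main lobe of width $\frac{2}{\Delta_r N_a}$. I would reuse the radial construction verbatim in the $\Theta$-domain: the inner flank of the periodic replica lies at a $\Theta$-separation equal to one period minus the main-lobe half-width, and exceeding it yields a near/far swap in $\Theta$. The construction is then transported back to the physical angle through the strictly increasing map $\arcsin$, which turns a symmetric $\Theta$-interval centered at $\sin\theta$ into the (asymmetric) $\theta$-interval $[\arcsin(\sin\theta+\frac{N_a-1}{N_a}),\arcsin(\sin\theta-\frac{N_a-1}{N_a})]$. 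The requirement that both endpoints stay in the domain $[-1,1]$ of $\arcsin$, i.e.\ $\sin\theta\pm\frac{N_a-1}{N_a}\in[-1,1]$, is exactly the hypothesis $\theta\in[\arcsin(-\frac{1}{N_a}),\arcsin(\frac{1}{N_a})]$ that restricts the reference to near broadside.

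The radial part is routine once Lemma~\ref{lem:lemma_1} supplies the period and main-lobe width. I expect the angular part to be the main obstacle, for two reasons. First, the bookkeeping of the Dirichlet kernel in the $\Theta$-domain must carefully separate period from main-lobe width to land on the correct replica edge, and one must check that $\bar{a}$ is strictly monotone on the chosen flank so that the violating inequality is \emph{strict}. Second, and more delicately, the change of variables $\Theta=\sin\theta$ is nonlinear: it warps the symmetric $\Theta$-intervals into reference-dependent asymmetric $\theta$-intervals, forces the broadside restriction for the interval to be well defined, and interacts with the main-lobe splitting near endfire described in Corollary~\ref{cor:cor_21}. Ensuring that the constructed triple $\vec{x},\vec{y},\vec{z}$ genuinely lies inside $A$ after this warping is the crux of the argument.
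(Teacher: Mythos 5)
Your proposal is correct and rests on exactly the same ingredients as the paper's proof: the product decomposition \eqref{eq:s}, the period and main-lobe width of the Dirichlet kernels from Lemma~\ref{lem:lemma_1} and Lemma~\ref{lem:lemma_2}, the arithmetic $D-\frac{L}{2}$ for the maximal ambiguity-free separation, the transport to the physical angle via $\arcsin$, and the broadside restriction from its domain. The only difference is packaging — you argue by contraposition with an explicit violating triple $(\vec{x},\vec{y},\vec{z})$ placed on a common ray (resp.\ common circle), whereas the paper directly derives the largest region in which only the central main lobes appear — so your version is a somewhat more explicit rendering of the same argument rather than a genuinely different route.
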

\end{minipage}
}
\begin{proof}
Considering that $s^{\star}$ is the product of two terms parametrized by orthogonal parameters, the area of identifiable channels is characterized by deriving ranges on these parameters independently. Essentially, this amounts to determining intervals on $r_i$, $r_j$ and $\theta_i$, $\theta_j$ where only the central main lobes of $\bar{f}$ and $\bar{a}$, respectively, remain.

Since the period of $\bar{f}$ is $D_{\bar{f}}=\frac{c}{\Delta \bar{f}}$, it is straightforward to conclude that for a given $r$ representing the radial center of the area, $r_i$ and $r_j$ should remain in the range $[r-(\frac{D_{\bar{f}}}{2}-\frac{L_{\bar{f}}}{4}),r+(\frac{D_{\bar{f}}}{2}-\frac{L_{\bar{f}}}{4})]$. Half the width of the main lobes (i.e., $\frac{L_{\bar{f}}}{2}$) is subtracted at the extremities to leave them out of the area and avoid resulting ambiguities. This results in an area of radial width of $2\times(\frac{D_{\bar{f}}}{2}-\frac{L_{\bar{f}}}{4})=c(\frac{1}{\Delta_f}-\frac{1}{B})$.

Regarding $\bar{a}$, consider \eqref{eq:second_term} where the period is $D_{\bar{a}}=\frac{1}{\Delta_r}$ w.r.t. $\Theta$. For a given $\theta$ representing the angular center of the area where $sin(\theta)=\Theta$,  $\Theta_i$ and $\Theta_j$ should remain in the range $[\Theta-(\frac{D_{\bar{a}}}{2}-\frac{L_{\bar{a}}}{4}),\Theta+(\frac{D_{\bar{a}}}{2}-\frac{L_{\bar{a}}}{4})]=[\Theta-\frac{N_a-1}{N_a},\Theta+\frac{N_a-1}{N_a}]$ (with $\Delta_r = \frac{1}{2}$). When converted to the angular domain, the angular range is given by $[\arcsin(\sin(\theta)-\frac{N_a-1}{N_a}),\arcsin(\sin(\theta)+\frac{N_a-1}{N_a})]$. To respect the definition domain of $\arcsin$, $\theta$ should remain in the interval $[\arcsin(-\frac{1}{N_a}),\arcsin(\frac{1}{N_a})]$.
\end{proof}

Proposition~\ref{prop:nec_ULA} is illustrated in Fig.~\ref{fig:identifiable_area_ULA}.
Notice how the width of the area is composed of two terms: $\frac{c}{\Delta_f}$ and $\frac{c}{B}$. When the number of subcarriers $N_s$ is high enough, the second one becomes negligible compared to the first, meaning that the radial size of the area is mainly controlled by the subcarrier spacing $\Delta_f$. On the other hand, when the number of antennas is high enough, the angular range of the area tends to $[-\frac{\pi}{2},\frac{\pi}{2}]$, meaning that it covers the whole area in front of the ULA. With this in mind, for UEs in an area of radial width $R$, the subcarrier spacing should be fixed so that $\Delta_f\lessapprox\frac{c}{R}$. Another important remark is that an identifiable area, by definition, should not cross the line of the ULA.

Proposition~\ref{prop:nec_ULA} is necessary because it is required that any channel set identifiable by $d^{\star}$ according to Definition~\ref{def:weak_iden} is included in a well defined region controlled by system parameters. It is defined as the smallest region where a single main lobe appears wherever the reference user is taken, effectively eliminating long-range ambiguities. However, it is not sufficient because short-range ambiguities still subsist.

\begin{figure}
	\includegraphics[width=\columnwidth]{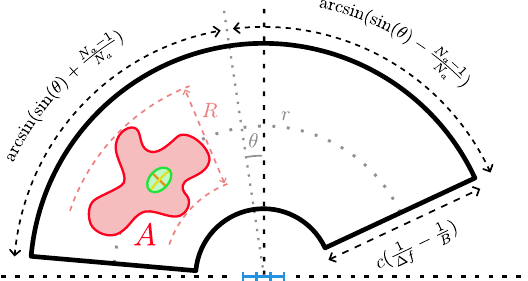}
	\caption{The identifiable area's outline. Any region A of any shape included in there (e.g., the red region) verifies the necessary condition for identifiability. The green patch represents the identifiable neighborhood of the channel at its center. The orange segment is its radial axis and the yellow arc is its angular axis. The blue segment represents the ULA at the location of the BS. $r$ is the radial center of the area and $\theta$ is its angular center.}
	\label{fig:identifiable_area_ULA}
\end{figure}

\noindent\fbox{
\begin{minipage}[t]{0.95\columnwidth}
\begin{proposition}
\label{prop:suff_ULA}
\textbf{Sufficient identifiability condition for ULA.} A sufficient condition for UEs in an area A to be weakly identifiable by $d^{\star}$ is, \emph{in addition to verifying the necessary identifiability condition}, for their similarity value $s^{\star}$ to be above or equal to a threshold $t_s\triangleq\max(\lvert D_{N_s}(3\pi)\rvert,\lvert D_{N_a}(3\pi)\rvert)$.
\end{proposition}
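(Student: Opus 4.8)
The plan is to prove the contrapositive of Definition~\ref{def:weak_iden}, exploiting the product structure \eqref{eq:s}, $s^{\star}=\bar{f}(r_i,r_j)\,\bar{a}(\theta_i,\theta_j)$, together with the fact that $d^{\star}$ in \eqref{eq:PI_distance} is a strictly decreasing function of $s^{\star}$, so that $d^{\star}(\mathbf{h}(\vec{y}),\mathbf{h}(\vec{x}))>d^{\star}(\mathbf{h}(\vec{z}),\mathbf{h}(\vec{x}))$ is equivalent to $s^{\star}(\mathbf{h}(\vec{y}),\mathbf{h}(\vec{x}))<s^{\star}(\mathbf{h}(\vec{z}),\mathbf{h}(\vec{x}))$. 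The first observation is that the threshold decouples across the two factors: since $0\le\bar{f}\le1$ and $0\le\bar{a}\le1$, the requirement $s^{\star}=\bar{f}\,\bar{a}\ge t_s$ forces $\bar{f}\ge\bar{f}\bar{a}\ge t_s$ and likewise $\bar{a}\ge t_s$. Hence a single threshold on the product constrains each Dirichlet factor separately, which is why $t_s$ is taken as the \emph{maximum} of the two relevant levels.

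Next I would argue that clipping each factor at $t_s$ removes every side lobe, leaving only the central main lobe described in Lemma~\ref{lem:lemma_1} and Lemma~\ref{lem:lemma_2}. By construction $t_s=\max(\lvert D_{N_s}(3\pi)\rvert,\lvert D_{N_a}(3\pi)\rvert)$ is the height of the first side lobe of each kernel, evaluated at the natural side-lobe location $x=3\pi$; and since the envelope $\tfrac{1}{N\sin(x/2N)}$ of the Dirichlet kernel decays, this first side lobe is the tallest. Because the necessary condition of Proposition~\ref{prop:nec_ULA} has already discarded the periodic replicas, within the area the super-level set $\{\bar{f}\ge t_s\}$ (resp. $\{\bar{a}\ge t_s\}$) reduces to a single interval contained in the main lobe, on which $\bar{f}$ is a strictly decreasing function of $|r_i-r_j|$ (resp. $\bar{a}$ a strictly decreasing function of $|\Theta_i-\Theta_j|$, with $\Theta=\sin\theta$). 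This monotonicity on the main lobe is exactly what eliminates the short-range ambiguities.

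It then remains to assemble the conclusion. Assume the negation of the right-hand side of Definition~\ref{def:weak_iden}, namely $|r_y-r_x|\le|r_z-r_x|$ and $|\theta_y-\theta_x|\le|\theta_z-\theta_x|$, for points satisfying $s^{\star}\ge t_s$. Radial monotonicity gives $\bar{f}(r_y,r_x)\ge\bar{f}(r_z,r_x)$. For the angular factor I would translate angular proximity into proximity in $\Theta$ using that $\theta\mapsto\sin\theta$ is increasing on $[-\tfrac{\pi}{2},\tfrac{\pi}{2}]$, a range guaranteed by Proposition~\ref{prop:nec_ULA} which keeps the area in front of the array, so that $\bar{a}(\theta_y,\theta_x)\ge\bar{a}(\theta_z,\theta_x)$. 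Multiplying these two inequalities, whose factors are nonnegative, yields $s^{\star}(\mathbf{h}(\vec{y}),\mathbf{h}(\vec{x}))\ge s^{\star}(\mathbf{h}(\vec{z}),\mathbf{h}(\vec{x}))$, hence $d^{\star}(\mathbf{h}(\vec{y}),\mathbf{h}(\vec{x}))\le d^{\star}(\mathbf{h}(\vec{z}),\mathbf{h}(\vec{x}))$, which is the required contrapositive.

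I expect two steps to carry the real difficulty. The first is making rigorous that $t_s$ dominates all side-lobe peaks so that each super-level set is genuinely a single monotone interval: the maximum of the first side lobe is attained slightly before $x=3\pi$, so one must verify that evaluating at $3\pi$ (the midpoint of $(2\pi,4\pi)$, where $\lvert\sin(x/2)\rvert=1$) furnishes an adequate cut-off, or define the side-lobe level accordingly. The second, and subtler, is the passage from $|\theta_i-\theta_j|$ to $|\Theta_i-\Theta_j|$: because $\sin$ is nonlinear, the ordering of angular distances is preserved under $\sin$ only when the two compared points lie on the same side of the reference angle, so this is the step where the argument must be handled with the most care.
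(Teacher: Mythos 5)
Your proposal is essentially the paper's own argument written in contrapositive form: the same decoupling of the product via $0\le\bar{f},\bar{a}\le 1$ and the max of the two thresholds, followed by the same reduction to strict monotonicity of each factor on its main lobe. The two difficulties you flag at the end are genuine, but the paper's proof does not resolve them either --- it also takes the side-lobe level at $x=3\pi$ rather than at the true (slightly larger) side-lobe peak of $\lvert D_N\rvert$, and it also passes silently from monotonicity in $\Theta=\sin\theta$ to monotonicity in $\theta$, which can reorder distances when the compared points lie on opposite sides of the reference angle.
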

\end{minipage}
}
\begin{proof}
	Given a reference channel $\mathbf{h}_\textrm{ref}$ in the identifiable area, to cut out the secondary lobes of $s^{\star}(\mathbf{h}_\textrm{ref},.)$, only values above their maximum should be retained for the distance measure. Consequently, corresponding thresholds $t_{\bar{f}}$ and $t_{\bar{a}}$ are derived for the frequency and angular terms respectively. The thresholds are deduced from the amplitude of the second highest extremum (because of the absolute value) of the Dirichlet kernel. The first one is given as $t_{\bar{f}} \triangleq \lvert D_{N_s}(3\pi)\rvert=\frac{1}{N_s\sin(\frac{3\pi}{2N_s})}$. Similarly, the second threshold is given as $t_{\bar{a}} \triangleq \lvert D_{N_a}(3\pi)\rvert=\frac{1}{N_a\sin(\frac{3\pi}{2N_a})}$.
	Since $s^\star=\bar{f}\times\bar{a}$ and knowing that $0\leq\bar{f}, \bar{a}\leq 1$, a sufficient condition for both terms to be above their thresholds is to impose $s^{\star}({\mathbf{h}_{\textrm{ref}}},.)\geq t_s \triangleq \max(t_{\bar{f}},t_{\bar{a}})$. 
    The set of channels verifying this condition, called the \emph{identifiable neighborhood} of $\mathbf{h}_{\textrm{ref}}$, is a subset of the set of all identifiable channels.

	For $\forall \vec{x}$ in the identifiable region and $\forall\vec{y},\vec{z}$ in its identifiable neighborhood
	\begin{align}
		&&d^{\star}(\mathbf{h}( \vec{y}),\mathbf{h}( \vec{x})) > d^{\star}(\mathbf{h}( \vec{z}),\mathbf{h}( \vec{x}))\\
		&\Leftrightarrow& \lvert \bar{f}(\mathbf{h}( \vec{y}),\mathbf{h}( \vec{x}))\rvert\times\lvert \bar{a}(\mathbf{h}( \vec{y}),\mathbf{h}( \vec{x}))\rvert>\\
		&&\lvert \bar{f}(\mathbf{h}( \vec{z}),\mathbf{h}( \vec{x}))\rvert\times\lvert \bar{a}(\mathbf{h}( \vec{z}),\mathbf{h}( \vec{x}))\rvert\\
		&\Rightarrow&\lvert \bar{f}(\mathbf{h}( \vec{y}),\mathbf{h}( \vec{x}))\rvert>\lvert \bar{f}(\mathbf{h}( \vec{z}),\mathbf{h}( \vec{x}))\rvert\\
		&&\textrm{or}\ \lvert \bar{a}(\mathbf{h}( \vec{y}),\mathbf{h}( \vec{x}))\rvert>\lvert \bar{a}(\mathbf{h}( \vec{z}),\mathbf{h}( \vec{x}))\rvert
	\end{align}
	Since the identifiable neighborhood corresponds to the main lobes of $\bar{f}$ and $\bar{a}$, this means that they are strictly monotonic in that interval w.r.t. their parameter $r$ and $\theta$ respectively. As a consequence,
	\begin{align*}
		d^{\star}(\mathbf{h}( \vec{y}),\mathbf{h}( \vec{x})) > d^{\star}(\mathbf{h}( \vec{z}),\mathbf{h}( \vec{x}))\\
		\Rightarrow |r_y-r_x| > |r_z-r_x|\ \textrm{or}\ |\theta_y-\theta_x| > |\theta_z-\theta_x|.
	\end{align*}
\end{proof}
Each identifiable channel has its own identifiable neighborhood. In particular, the spatial region associated to it has a shape that resembles an ellipse with a radial linear axis and an angular curved one, corresponding to the radial and angular lobes respectively, as shown on Fig.~\ref{fig:identifiable_area_ULA}. Consequently, any channel which location belongs to that region is in the identifiable neighborhood of the reference channel. This results in a neighborhood graph that can be exploited by other methods to construct the low-dimensional channel chart. A consequence of the thresholding operation involved in the construction of this graph is that the main lobes' widths are changed. The new width of the main lobe of the frequency component, noted $L'_{\bar{f}}$, is the solution to the equation $\left\vert D_{N_s}(\frac{\pi BL'_{\bar{f}}}{c})\right\vert=t_s$. It can be computed through the inversion of $D_{N_s}$ in the interval $[0,2\pi]$ using classical root finding algorithms. 
A similar method can be used to derive the new width of the main lobe of the angular component, noted $L'_{\bar{a}}$ (in rad). It is given this time as the difference between the two roots of the equation $\lvert D_{N_a}(2\pi\Delta_r N_a (\sin\theta_0-\sin\theta))\rvert=t_s$ in the interval $[-\frac{\pi}{2},\frac{\pi}{2}]$. Hence, the radial axis of an identifiable neighborhood is of length $L'_{\bar{f}}$ while its angular axis is of length $L'_{\bar{a}}\times r_\textrm{ref}$.

Although the necessary and sufficient conditions guarantee the weak identifiability of $d^*$ in the case of a ULA, it  can still suffer from the varying geometrical size of the identifiable neighborhoods. Indeed, the width of the main lobe of $\bar{a}(\theta_\textrm{ref},.)$ is not constant and depends on $\theta_
i$. In addition, a ULA is limited to the half-space in front of it because of the symmetry at $\theta= \frac{\pi}{2}\;[\pi]$ (the axis of the ULA) causing ambiguities. The next section dives into how using a UCA can help mitigate these limitations.

\section{UCA are better than ULA for channel charting}
\label{sec:UCA}
In this section, the use of UCAs as a replacement for ULAs is explored. \new{The objective of this section is to show that UCAs are better for charting than ULAs because they do not exhibit axial symmetry and their angular resolution is constant (does not depend on the azimuth), which allows to derive a relation between the bandwidth and the radius of the UCA that limits the warping behavior of CC.}

\begin{lemma}
	\label{lem:a_tilde}
	With a UCA, a good approximation of $\bar{a}$ is of the form
	\begin{equation}
		\tilde{a}(\theta_i,\theta_j)=\left\vert J_0\left(\frac{4\pi}{\lambda}R\left\vert\sin\frac{\theta_i-\theta_j}{2}\right\vert\right)\right\vert,
	\end{equation}
	where $J_0$ is the Bessel integral of order 0. It contains a single main lobe in the interval $[-\pi,\pi]$ centered at $\theta_i=\theta_j$ of width $L_{\tilde{a}}=4\arcsin(\min(\frac{\lambda}{4\pi R}\times j_{0,1},1))$ (in rad) where $j_{0,1}=2.4048$ is the first root of the Bessel integral of order 0~\cite{Weisstein2020}.
\end{lemma}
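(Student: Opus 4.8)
The plan is to start directly from the definition $\bar{a}(\theta_i,\theta_j) = |\mathbf{a}(\theta_i)^H\mathbf{a}(\theta_j)|$ read off from the decomposition \eqref{eq:s}, and to substitute the UCA steering vector \eqref{eq:steer_vec} together with the circular antenna positions $\vec{p}_n = (R\cos\phi_n, R\sin\phi_n)$, where $\phi_n = \frac{2\pi(n-1)}{N_a}$. First I would write the inner product as $\frac{1}{N_a}\sum_{n=1}^{N_a} e^{j\frac{2\pi}{\lambda}\vec{p}_n\cdot(\vec{u}(\theta_j)-\vec{u}(\theta_i))}$ and use $\vec{p}_n\cdot\vec{u}(\theta) = R\cos(\phi_n-\theta)$, so that the phase of the $n$-th term is governed by $\cos(\phi_n-\theta_j)-\cos(\phi_n-\theta_i)$. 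Applying the sum-to-product identity $\cos A - \cos B = -2\sin\frac{A+B}{2}\sin\frac{A-B}{2}$ collapses this into $-2R\sin(\phi_n - \frac{\theta_i+\theta_j}{2})\sin\frac{\theta_i-\theta_j}{2}$, yielding the exact expression $\bar{a}(\theta_i,\theta_j) = \left|\frac{1}{N_a}\sum_{n=1}^{N_a} e^{-j\frac{4\pi R}{\lambda}\sin(\phi_n-\bar\theta)\sin\frac{\theta_i-\theta_j}{2}}\right|$ with $\bar\theta = \frac{\theta_i+\theta_j}{2}$.

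The central step, and the one carrying the word \emph{approximation} in the statement, is to replace the finite average over the uniformly spaced phases $\phi_n$ by its integral (large-$N_a$) limit $\frac{1}{2\pi}\int_0^{2\pi}e^{-j\frac{4\pi R}{\lambda}\sin(\phi-\bar\theta)\sin\frac{\theta_i-\theta_j}{2}}d\phi$. Because the integrand is $2\pi$-periodic in $\phi$, the shift $\bar\theta$ can be dropped; this is precisely what makes the limiting expression depend only on the angular difference $\theta_i-\theta_j$ and not on the mean azimuth, i.e.\ it is what gives the UCA its azimuth-independent angular resolution. I would then recognize the resulting integral as the standard integral representation of the Bessel function, $J_0(z) = \frac{1}{2\pi}\int_0^{2\pi}e^{jz\sin\phi}d\phi$, giving $J_0\left(\frac{4\pi R}{\lambda}\sin\frac{\theta_i-\theta_j}{2}\right)$; taking the modulus and using that $J_0$ is even lets me move the absolute value inside the argument to recover $\tilde{a}(\theta_i,\theta_j) = \left|J_0\left(\frac{4\pi}{\lambda}R\left|\sin\frac{\theta_i-\theta_j}{2}\right|\right)\right|$.

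For the lobe structure, I would argue that on the difference range $[-\pi,\pi]$ the map $\theta_i-\theta_j \mapsto \left|\sin\frac{\theta_i-\theta_j}{2}\right|$ is strictly increasing in $|\theta_i-\theta_j|$ and vanishes only at $\theta_i=\theta_j$; hence the peak of $|J_0|$ at argument $0$ is attained exactly once, so there is a single main lobe centered at $\theta_i=\theta_j$ (in contrast with the ULA, whose axial symmetry duplicates it). The main lobe of $J_0$ extends up to its first positive root $j_{0,1}=2.4048$, so its edges in the angular domain solve $\frac{4\pi R}{\lambda}\left|\sin\frac{\theta_i-\theta_j}{2}\right| = j_{0,1}$, i.e.\ $|\theta_i-\theta_j| = 2\arcsin\left(\frac{\lambda}{4\pi R}j_{0,1}\right)$; doubling to account for both sides gives the width $L_{\tilde a}=4\arcsin\left(\frac{\lambda}{4\pi R}j_{0,1}\right)$, and the $\min(\cdot,1)$ simply guards the $\arcsin$ domain for small radii where the first root is never reached and the lobe fills the whole interval.

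The main obstacle I expect is making the integral approximation step rigorous and quantifying its quality: the equality is only exact as $N_a\to\infty$, and for finite $N_a$ one should control the residual azimuth dependence and the Riemann-sum error (this is essentially a condition on the angular sampling density of the array, analogous to avoiding grating lobes). For the purposes of the lemma I would justify it as the leading-order behavior for large $N_a$ and defer the precise error characterization, since the subsequent identifiability arguments only use the lobe location and width.
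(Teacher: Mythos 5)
Your proof is correct and follows essentially the same route as the paper's: substitute the UCA steering vectors, pass to the continuous-aperture ($N_a\to\infty$) limit, and recognize the integral representation of $J_0$. The only cosmetic difference is that you collapse the phase with the sum-to-product identity before taking the limit, whereas the paper changes variables inside the integral and uses an amplitude--phase form; both land on the same Bessel integral, and your derivation of the single main lobe and of $L_{\tilde{a}}$ (which the paper's appendix leaves implicit) is also sound.
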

\begin{proof}
	See Appendix~\ref{app:lemma_3}.
\end{proof}

Based on this approximation, a new distance function is defined,

\begin{equation}
	\tilde{d}^{\star}(\mathbf{h}_i,\mathbf{h}_j)\triangleq\sqrt{2-2\tilde{s}^{\star}(\mathbf{h}_i,\mathbf{h}_j)},
	\label{eq:d}
\end{equation}
where $\tilde{s}^{\star}(\mathbf{h}_i,\mathbf{h}_j)=\bar{f}(r_i,r_j)\times\tilde{a}(\theta_i,\theta_j)$. Fig.~\ref{fig:func_plot_UCA} show a 2D color plot of~\eqref{eq:s} along with plots of $\bar{f}$ and $\tilde{a}$ for some arbitrary system parameters and for a reference user with associated channel $\mathbf{h}_\textrm{ref}$ and location $(r_\textrm{ref},\theta_\textrm{ref})$. Contrary to $s^\star$, this time only one angular main lobe appears.

\begin{figure}
	\includegraphics[width=\columnwidth]{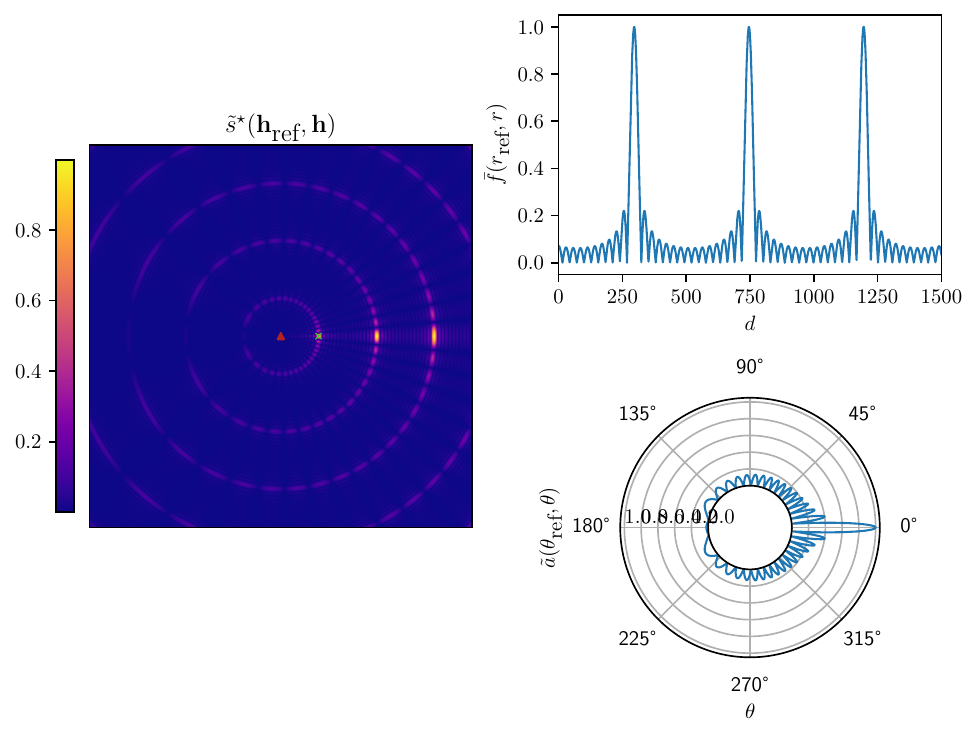}
	\caption{The left figure shows a plot of the similarity $\tilde{s}^\star(\mathbf{h}_\textrm{ref},\mathbf{h})$ between the reference user (green cross) and the rest of the points on the map. The BS is placed at its center. The top right figure shows a plot of $\bar{f}(r_\textrm{ref},r)$ and the bottom right one shows a plot of $\tilde{a}(\theta_\textrm{ref},\theta)$.}
	\label{fig:func_plot_UCA}
\end{figure}

\noindent\fbox{
\begin{minipage}[t]{0.95\columnwidth}
\begin{propositionbis}{prop:nec_ULA}
\label{prop:nec_UCA}
 \textbf{Necessary identifiability condition for UCA.} 
For UEs in an area $A$ to be weakly identifiable by $\tilde{d}^{\star}$, its radial size $R$ should satisfy $R \leq c(\frac{1}{\Delta_f}-\frac{1}{B}) \approx \frac{c}{\Delta_f}$.
\end{propositionbis}
\end{minipage}
}

\begin{proof}
	The first part of the proof is the same as the one involving $\bar{f}$ in the proof of Proposition~\ref{prop:nec_ULA}.

	Regarding $\tilde{a}$, the function contains a single main lobe in its domain of definition according to Lem.~\ref{lem:a_tilde} and is thus not subject to long-range ambiguities that can be caused by the period nature of a a function.
\end{proof}

Proposition~\ref{prop:nec_UCA} is illustrated Fig.~\ref{fig:identifiable_area_UCA}. It highlights the fact that with a UCA, the identifiable area is no longer angularly restricted and is extended to include the whole angular domain (i.e., $[-\pi,\pi]$) with no ambiguities.

\noindent\fbox{
\begin{minipage}[t]{0.95\columnwidth}
\begin{propositionbis}{prop:suff_ULA}
\label{prop:suff_UCA}
\textbf{Sufficient identifiability condition for UCA.} A sufficient condition for UEs in an area A to be weakly identifiable by $\tilde{d}^{\star}$ is, \emph{in addition to verifying the necessary identifiability condition and for $N_s>2$}, for their similarity value $\tilde{s}^{\star}$ to be above or equal to a threshold $t_{\tilde{s}}\triangleq 0.403$.
\end{propositionbis}
\end{minipage}
}

\begin{proof}
	The proof follows the same logic as that of Proposition~\ref{prop:suff_ULA} with the difference that this time the threshold of the angular term is changed and is derived from the amplitude of the second highest extremum of the Bessel integral. It is straightforwardly given as $t_{\tilde{a}}\triangleq\lvert J_0(j'_{0,1})\rvert=0.403$ where $j'_{0,1}=3.8317$ is the first root of the derivative of the Bessel integral of order 0~\cite{Weisstein2020}. The global threshold is given as $t_{\tilde{s}} \triangleq \max(t_{\bar{f}},t_{\tilde{a}})$. Finally, knowing that $t_{\bar{f}}$ is strictly decreasing with $N_s$, and that $t_{\bar{f}}<t_{\tilde{a}}$ for $N_s>2$, 
    it follows that $t_{\tilde{s}}=t_{\tilde{a}}$ for $N_s>2$.
\end{proof}

This time again, thresholding has an impact on the widths of the main lobes. Since $t_{\tilde{s}}$ has now a constant value, a closed-form expression of $L'_{\bar{f}}$ can be derived through inversion of $\left\vert D_{N_s}(\frac{\pi BL'_{\bar{f}}}{c})\right\vert=t_{\tilde{s}}$. Although the root depends on $N_s$, its actual value varies negligibly with it and is thus derived for $N_s\rightarrow\infty$ as an acceptable approximation. As a result, the width of the radial main lobe after thresholding is given by $L'_{\bar{f}}=\frac{c\times 4.238}{\pi B}$. Similarly, the new width of $\tilde{a}$ (in rad), noted $L'_{\tilde{a}}$, is given as the solution to the equation $\left\vert J_0(\frac{4\pi}{\lambda}R\left\vert\sin\frac{L'_{\tilde{a}}}{4}\right\vert)\right\vert=t_{\tilde{s}}$. It can be computed through the inversion of $J_0$ in the interval $[0,j_{0,1}]$. It's value is given by $L'_{\bar{a}}=4\arcsin(\frac{\lambda}{4\pi R}\times 1.692)$.

\begin{figure}[t!]
\centering
	\includegraphics[width=0.8\columnwidth]{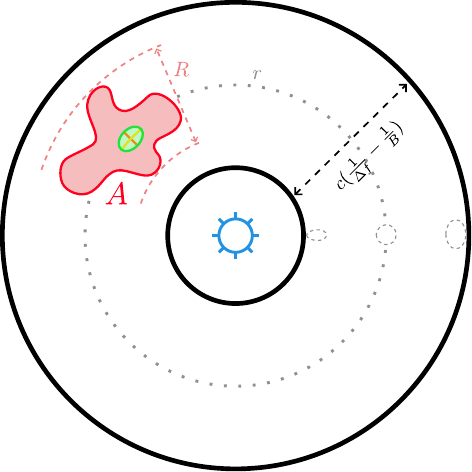}
	\caption{The identifiable area's outline. Any region A of any shape included in there (e.g., the red region) verifies the necessary condition for identifiability. The green patch represents the identifiable neighborhood of the channel at its center. The orange segment is its radial axis and the yellow arc is its angular axis. The blue circle represents the UCA at the location of the BS. $r$ is the radial center of the area.}
	\label{fig:identifiable_area_UCA}
\end{figure}

\noindent\fbox{
\begin{minipage}[t]{0.95\columnwidth}
\begin{proposition}
\label{prop:UCA}
\textbf{Constant angular spread.} When using the thresholded channel distance, the UCA (as opposed to the ULA) yields neighborhoods whose angular spread do not depend on the azimuth.
\end{proposition}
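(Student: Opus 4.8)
The plan is to reduce the statement to the azimuthal translation invariance of the UCA angular term established in Lemma~\ref{lem:a_tilde}. The decisive structural fact is that $\tilde{a}(\theta_i,\theta_j)$ depends on the two azimuths \emph{only through their difference} $\theta_i-\theta_j$, via the term $\sin\frac{\theta_i-\theta_j}{2}$. This stands in contrast to the ULA angular term of Lemma~\ref{lem:lemma_2}, $\bar{a}(\theta_i,\theta_j)=\lvert D_{N_a}(2\pi\Delta_r N_a(\sin\theta_i-\sin\theta_j))\rvert$, which depends on $\sin\theta_i-\sin\theta_j$ and is therefore invariant only in the $\Theta=\sin\theta$ variable, not in $\theta$ itself. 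The entire proposition is a consequence of this invariance propagating unchanged through the thresholding operation.

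First I would recall, from the construction used in Proposition~\ref{prop:suff_UCA}, that the angular extent of the identifiable neighborhood of a reference azimuth $\theta_i$ after thresholding is the set of azimuths $\theta_j$ satisfying $\tilde{a}(\theta_i,\theta_j)\geq t_{\tilde{s}}$. Substituting the closed form of $\tilde{a}$, this is
\begin{equation*}
	\left\{\theta_j : \left\vert J_0\left(\frac{4\pi}{\lambda}R\left\vert\sin\frac{\theta_i-\theta_j}{2}\right\vert\right)\right\vert\geq t_{\tilde{s}}\right\}.
\end{equation*}
The key step is to observe that, since the left-hand side is a function of $\theta_i-\theta_j$ alone and the main lobe of $\tilde{a}$ is centered at $\theta_i=\theta_j$ and monotonically decreasing away from it (Lemma~\ref{lem:a_tilde}), this set is an interval of the form $[\theta_i-\tfrac{L'_{\tilde{a}}}{2},\theta_i+\tfrac{L'_{\tilde{a}}}{2}]$. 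Its half-width is fixed by the boundary condition $\lvert J_0(\frac{4\pi}{\lambda}R\lvert\sin\frac{L'_{\tilde{a}}}{4}\rvert)\rvert=t_{\tilde{s}}$, already solved before the proposition to give $L'_{\tilde{a}}=4\arcsin(\frac{\lambda}{4\pi R}\times 1.692)$, an expression containing \emph{no} dependence on $\theta_i$. Hence the angular spread of the neighborhood is the same for every reference azimuth.

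To complete the argument I would contrast this with the ULA: by Corollary~\ref{cor:cor_21} the analogous neighborhood is bounded by $\arcsin(\sin\theta_i\pm\frac{1}{N_a\Delta_r})$, whose width genuinely varies with $\theta_i$ because the $\arcsin$ is applied \emph{after} an azimuth-dependent offset in $\Theta$-space, so a fixed increment in $\Theta$ maps to a $\theta$-interval of azimuth-dependent length. Since essentially all the analytic work is carried out in Lemma~\ref{lem:a_tilde} and in the prior derivation of $L'_{\tilde{a}}$, no real computational obstacle remains; the only point requiring care is to make explicit that the azimuthal invariance of $\tilde{a}$ is itself a manifestation of the rotational symmetry of the UCA geometry, and that thresholding, being a level-set operation, preserves this invariance and therefore yields an azimuth-independent neighborhood width.
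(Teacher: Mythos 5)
Your proposal is correct and follows essentially the same route as the paper, whose proof simply observes that the post-thresholding angular width $L'_{\tilde{a}}=4\arcsin(\frac{\lambda}{4\pi R}\times 1.692)$ contains no dependence on $\theta$. You merely make explicit the underlying reason --- that $\tilde{a}$ depends on the azimuths only through their difference $\theta_i-\theta_j$, so the level set defining the neighborhood is a translation-invariant interval --- together with the contrast to the ULA, both of which the paper leaves implicit in the surrounding discussion.
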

\end{minipage}
}
\begin{proof}
	The proof directly follow from the fact that $L'_{\bar{a}}$, which represents the angular spread of the obtained neighborhoods, doesn't depend on $\theta$.
\end{proof}
This is a very interesting property since it implies that the angular distance perceived by the system is independent of the azimuth, thus avoiding warping in the obtained channel chart, as illustrated in Sec.~\ref{ssec:ula_vs_uca}. However, the size of the resulting identifiable neighborhoods are still not constant everywhere. To ensure that the distance reflects the true geometrical neighborhoods without distortion, the identifiable neighborhoods should be as round as possible. This is measured by the ratio of both their axis, i.e. $\gamma\triangleq\frac{L'_{\bar{f}}}{L'_{\tilde{a}}\times r_\textrm{ref}}\approx\frac{c}{Br_\textrm{ref}}\frac{1.06}{\pi \arcsin\left( \frac{\lambda}{4\pi R}.1.692 \right)}$.

\noindent\fbox{
\begin{minipage}[t]{0.95\columnwidth}
\begin{proposition}
\label{prop:round}
\textbf{Round neighborhoods.} For a BS equipped with a UCA of radius $R$, a bandwidth $B$ and an area considered for charting whose center is at a distance $r_0$ from the BS, the neighborhoods identified by the thresholded PI distance are round at the center of the area if and only if $1=\gamma\approx \frac{1.06c}{Br_0\pi \arcsin\left( \frac{\lambda}{4\pi R}.1.692 \right)}$.
\end{proposition}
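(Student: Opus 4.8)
The plan is to formalize what \emph{round} means for an identifiable neighborhood and then reduce the claim to a direct substitution into the already-derived expression for $\gamma$. Recall from the discussion following Proposition~\ref{prop:suff_UCA} that the identifiable neighborhood of a reference channel at radial distance $r_\textrm{ref}$ is, to first order, an ellipse-like spatial patch whose radial extent equals the thresholded radial main-lobe width $L'_{\bar{f}}$ and whose angular extent, converted to an arc length on the ground plane, equals $L'_{\tilde{a}}\times r_\textrm{ref}$. I would first argue that such a patch is geometrically isotropic (round) exactly when these two characteristic lengths coincide, i.e. when the ratio of its radial axis to its angular axis is unity. This is the only genuinely conceptual step, and the part deserving the most care: I would make it precise by observing that both quantities measure true metric distances on the plane — the radial one directly, the angular one precisely because multiplication by $r_\textrm{ref}$ turns an angular width in radians into an arc length — so their equality is by definition the statement that the neighborhood's spatial footprint has no preferred direction.

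Next, I would recall the two closed-form lengths established just before the proposition: $L'_{\bar{f}}=\frac{4.238\,c}{\pi B}$, obtained by inverting $\lvert D_{N_s}\rvert=t_{\tilde{s}}$ in the acceptable approximation $N_s\to\infty$, and $L'_{\tilde{a}}=4\arcsin\!\left(\frac{\lambda}{4\pi R}\cdot 1.692\right)$, obtained by inverting $\lvert J_0\rvert=t_{\tilde{s}}$ on $[0,j_{0,1}]$. Forming $\gamma=\frac{L'_{\bar{f}}}{L'_{\tilde{a}}\,r_\textrm{ref}}$ and simplifying the numerical constant $4.238/4\approx 1.06$ gives $\gamma\approx\frac{1.06\,c}{B\,r_\textrm{ref}\,\pi\arcsin\left(\frac{\lambda}{4\pi R}\cdot 1.692\right)}$, exactly matching the quantity displayed in the statement up to the renaming of the reference radius.

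Finally, I would specialize to the center of the charting area by setting $r_\textrm{ref}=r_0$ and combine this with the roundness criterion $\gamma=1$ from the first step. The ``if and only if'' then follows immediately: by the first step the central neighborhood is round precisely when its radial and angular characteristic lengths are equal, which is precisely $\gamma=1$, and $\gamma$ evaluated at $r_\textrm{ref}=r_0$ is exactly the displayed expression. No further estimation is required; all remaining steps are the substitutions above and are purely routine bookkeeping.
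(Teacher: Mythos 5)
Your proposal is correct and follows essentially the same route as the paper: the paper's proof likewise asserts that the approximately elliptical identifiable neighborhood is round exactly when its two axis lengths $L'_{\bar{f}}$ and $L'_{\tilde{a}}\times r_0$ coincide, which is $\gamma=1$, with the closed-form expression for $\gamma$ (including the $4.238/4\approx 1.06$ simplification) taken from the text preceding the proposition. You merely spell out the substitutions and the metric justification for the roundness criterion in more detail than the paper does.
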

\end{minipage}
}
\begin{proof}
	Since an identifiable neighborhood has the shape of an ellipse approximately, it is round when its axes' lengths, $L'_{\bar{f}}$ and $L'_{\tilde{a}}\times r_0$, are equal. As a consequence, $\gamma=1$.
\end{proof}

\begin{figure}[t!]
	\includegraphics[width=\columnwidth]{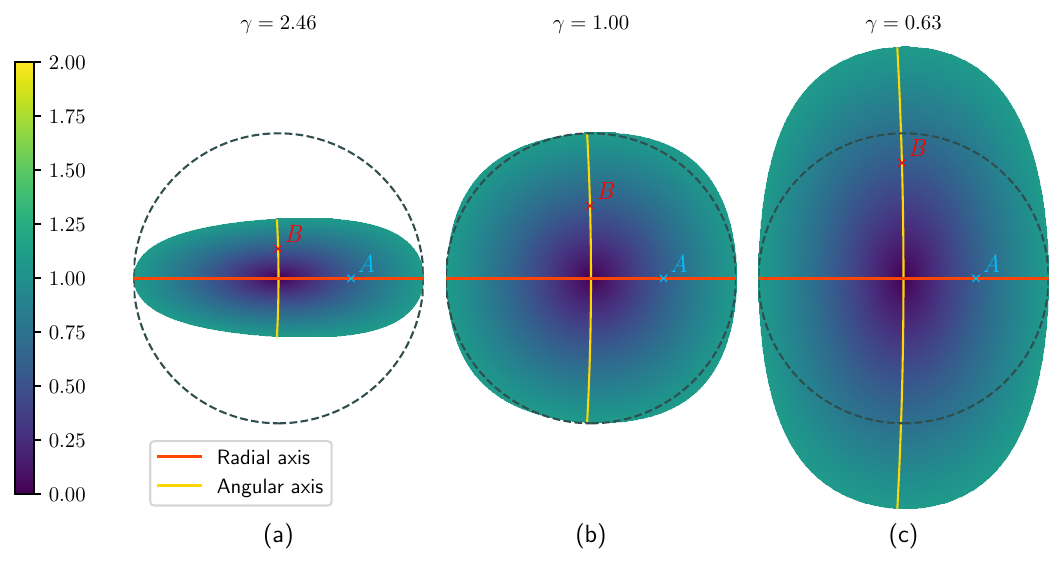}
	\caption{Visualization of the distance measure for identifiable neighborhoods at (a) the inner, (b) center and (c) outer edge of the identifiable area as pictured in Fig.~\ref{fig:identifiable_area_UCA}. The circle outline helps visualize how their shapes deviate from being perfectly round. UEs A and B are considered at the same distance from the point of view of the PI distance.}
	\label{fig:neighborhoods}
\end{figure}

This result allows to determine the optimal bandwidth $B$ as a function of the distance $r_0$ and the radius $R$ as
$B_{\text{opt}} = \frac{c}{r_0}\frac{1.06}{\pi \arcsin\left( \frac{\lambda}{4\pi R}.1.692 \right)}$. It also allows to determine the optimal radius $R$ as a function of the distance $r_0$ and the bandwidth $B$ as $R_{\text{opt}} = \frac{\lambda . 1.692}{4\pi\sin\left( 
\frac{c}{Br_0} \frac{1.06}{\pi} \right)}$.



\old{
	\begin{align}
		\begin{split}
			& =\left\vert\sum_{i=1}^{N_a}e^{-j\frac{2\pi}{\lambda}(\cos(2\pi\frac{i}{N_a})(\cos\theta_0-\cos\theta)+\sin(2\pi\frac{i}{N_a})(\sin\theta_0-\sin\theta))}\right\vert \\
			& =\left\vert\sum_{i=1}^{N_a}e^{-j\frac{2\pi}{\lambda}((\cos(2\pi\frac{i}{N_a})\cos\theta_0+\sin(2\pi\frac{i}{N_a})\sin\theta_0) -(\cos(2\pi\frac{i}{N_a})\cos\theta+\sin(2\pi\frac{i}{N_a})\sin\theta))}\right\vert \\
			& =\left\vert\sum_{i=1}^{N_a}e^{-j\frac{2\pi}{\lambda}(\cos(2\pi\frac{i}{N_a}-\theta_0)-\cos(2\pi\frac{i}{N_a}-\theta))}\right\vert.
		\end{split}
	\end{align}
	Making the substitution $j=i-\lfloor\frac{N_a\theta}{2\pi}\rfloor$ with $j\in\mathbb{Z}$, we have:
	\begin{align*}
		\bar{a}_{\vec{u}_0}(\vec{u}) & =\left\vert\sum_{j=1-\lfloor\frac{N_a\theta}{2\pi}\rfloor}^{N_a-\lfloor\frac{N_a\theta}{2\pi}\rfloor}e^{-j\frac{2\pi}{\lambda}(\cos(2\pi\frac{j}{N_a}+\frac{2\pi}{N_a}\lfloor\frac{N_a\theta}{2\pi}\rfloor-\theta_0)-\cos(2\pi\frac{j}{N_a}-\frac{2\pi}{N_a}\{\frac{N_a\theta}{2\pi}\}))}\right\vert \\
		                             & = \left\vert\sum_{j=1-\lfloor\frac{N_a\theta}{2\pi}\rfloor}^{N_a-\lfloor\frac{N_a\theta}{2\pi}\rfloor}e^{-j\frac{2\pi}{\lambda}(\cos(2\pi\frac{j}{N_a}+\tilde{\Delta\theta}) - \cos(2\pi\frac{j}{N_a}) + O(\frac{2\pi}{N_a}\{\frac{N_a\theta}{2\pi}\}))}\right\vert,
	\end{align*}
	with $\tilde{\Delta\theta}=\frac{2\pi}{N_a}\lfloor\frac{N_a\theta}{2\pi}\rfloor-\theta_0$ Let's focus in the rightmost term of the sum. A Taylor series expansion of the cosine gives
	\begin{align*}
		\cos(2\pi\frac{j}{N_a}-\frac{2\pi}{N_a}\{\frac{N_a\theta}{2\pi}\}) & = \cos(2\pi\frac{j}{N_a}) + \sin(2\pi\frac{j}{N_a})\frac{2\pi}{N_a}\{\frac{N_a\theta}{2\pi}\} - \frac{\cos(2\pi\frac{j}{N_a})}{2} (\frac{2\pi}{N_a}\{\frac{N_a\theta}{2\pi}\})^2 + \frac{\sin(2\pi\frac{j}{N_a})}{6}(\frac{2\pi}{N_a}\{\frac{N_a\theta}{2\pi}\})^3 + \dots \\
		                                                                   & = \cos(2\pi\frac{j}{N_a}) + O(\frac{2\pi}{N_a}\{\frac{N_a\theta}{2\pi}\})
	\end{align*}
	\comment{Ajouter des explications sur l'approximation.}
	\begin{align*}
		\bar{a}_{\vec{u}_0}(\vec{u}) & =\left\vert\sum_{j=1-\lfloor\frac{N_a\theta}{2\pi}\rfloor}^{N_a-\lfloor\frac{N_a\theta}{2\pi}\rfloor}e^{-j\frac{2\pi}{\lambda}(\cos(2\pi\frac{j}{N_a}+\tilde{\Delta\theta}) - \cos(2\pi\frac{j}{N_a}) + O(\frac{2\pi}{N_a}\{\frac{N_a\theta}{2\pi}\}))}\right\vert
	\end{align*}

	We have
	\begin{equation*}
		\cos(2\pi\frac{j}{N_a}+\tilde{\Delta\theta}) - \cos(2\pi\frac{j}{N_a})=\sqrt{2-2\cos\tilde{\Delta\theta}}\cos(2\pi\frac{j}{N_a} + \atantwo(\sin\tilde{\Delta\theta},\cos\tilde{\Delta\theta}-1))
	\end{equation*}\\}

Fig.~\ref{fig:neighborhoods} shows the shape of the identifiable neighborhoods when the reference user is at the inner edge of the identifiable area, at it's center and at it's outer edge, from left to right respectively. Because of distortion, users A and B will be considered at the same distance from the reference user as measured by \eqref{eq:s}, although it's only true for the center identifiable neighborhood. Furthermore, its area can be approximated with the area of the circle of diameter $L'_f$ as $\frac{\pi L_f^{'2}}{4}$, as shown on the figure. Since for better performance a minimum number of neighboring users $k_{\textrm{min}}$ should be located inside the area defined by the patch for each reference user, the minimum density of users is given as $\frac{4k_{\textrm{min}}}{\pi L_{\bar{f}}^{'2}}$. In reality, this density should be a bit higher to account for the distortion of the neighborhood going towards the inner part of the identifiable area. In practice, the number of neighbors inside of the identifiable neighborhoods will vary for each reference user, but there should be at least one to connect it to the neighborhood graph constructed by Isomap.

\neww{
\section{Discussion}
\label{sec:disc}
\subsection{Practical consequences of the theoretical findings}
The theoretical results given in the previous sections can be leveraged in practice to optimize channel charting results. Indeed, each proposition can be used to provide concrete guidelines on the system settings. As an example, consider  a scenario in which channel charting has to be carried out with users located in an area $A$ of radial size $R$ and center $r_0$ (as in Fig.~\ref{fig:identifiable_area_UCA}). Then, the following can be stated: 
\begin{itemize}
	\item The system should have a subcarrier spacing $\Delta_f \leq \frac{c}{R}$ (Prop.~\ref{prop:nec_ULA} and Prop.~\ref{prop:nec_UCA}) in order to avoid radial ambiguities.
	\item The BS should be equipped with a UCA of a given radius $R_\textrm{UCA}$ in order to avoid angular ambiguities (Prop.~\ref{prop:nec_UCA}) and ensure a constant angular resolution (Prop.~\ref{prop:UCA}).
	\item The bandwidth $B$ and UCA radius $R_\textrm{UCA}$ should be chosen so as to satisfy $\frac{1.06c}{Br_0\pi \arcsin\left( \frac{\lambda}{4\pi R_\textrm{UCA}}.1.692 \right)}=1$ to ensure having the roundest neighborhoods possible at the radial center $r_0$ of the area to chart (Prop.~\ref{prop:round}).
	\item The computed PI distance should be thresholded in order to keep only values below $\sqrt{2-2\times t_{\tilde{s}}} = 1.093$ to avoid oscillations due to secondary lobes (Prop.~\ref{prop:suff_UCA}).
\end{itemize}
Following all these guidelines strongly improves channel charting results, as shown empirically in the following section.}


\neww{\subsection{Generalization to multipath scenarios}
The analysis conducted in this paper is limited to single path LoS channels. However, it can be directly extended to handle multipath propagation, provided a LoS path exists. Indeed, once channels are estimated, one can envision extracting the LoS component using a sparse recovery algorithm. For example, this can be done with a single iteration of a greedy algorithm such as matching pursuit \cite{Mallat1993} and a dictionary whose atoms are products of steering vectors and frequency response vectors. Then, the proposed analysis and charting strategy can be applied as is on LoS components extracted from multipath channels.
}

\neww{\subsection{Importance of identifiability for applications}
Having an identifiable distance measure with consistent neighborhoods and no ambiguities translates into learning channel charts of great quality. Such charts constitute a solid base for subsequent tasks that heavily rely on spatial features like radio resource management. For example, pilot allocation usually relies on reusing non-orthogonal pilot sequences amongst nonadjacent UEs. CC can be used in this context to avoid pilot contamination by providing charts reflecting this nonadjacency~\cite{Ribeiro2020}. It is thus important to base the charting process on a distance free of ambiguities. Another example is that of CC-based beamforming where the same principle applies so that beams are allocated based on trustworthy and consistent spatial information~\cite{Yassine2022b}. Conversely, it the ambiguities are not effectively eliminated, the risk is for the learned charts to convey inconsistent indications that could degrade the performance of downstream tasks. 
}

\section{Experiments}
\label{sec:exp}
In this section, the theoretical claims and system setting guidelines of Sec.~\ref{sec:ULA}, Sec.~\ref{sec:UCA} \neww{and Sec.~\ref{sec:disc}} are empirically validated both on synthetic and realistic channel data. \new{The Isomap dimensionality reduction algorithm \cite{Tenenbaum2000} is used to produce the final chart, unless otherwise stated. Key parameters are summarized in Table~\ref{tab:parameters}.}


\begin{table}[h]
	\neww{
	\centering
	\begin{tabular*}{\columnwidth}{@{\extracolsep{\fill}}rccc@{\extracolsep{\fill}}}
	\toprule
	\multicolumn{1}{c}{\multirow{2}{*}{Parameter}} & \multicolumn{3}{c}{Experiment}                          \\ \cmidrule(l){2-4} 
	\multicolumn{1}{c}{}                           & Sec.~\ref{ssec:toy} & Sec.~\ref{ssec:ula_vs_uca} & Sec.~\ref{ssec:real}     \\ \midrule
	Bandwidth                                      & 10 MHz & 10 MHz  & 7 MHz \\
	Number of subcarriers                          & 16 & 16      & 19      \\
	Carrier frequency                              & 3 GHz & 3 GHz & 3 GHz                               \\
	Type of array            & ULA          & UCA          & UCA   \\ 
 Number of antennas & 16           & 64           &  64    \\
	Number of samples                                  & 2838 & 2838  & 3383  \\ \bottomrule
	\end{tabular*}
	\caption{\neww{Summary of key simulation parameters for experiments of Sec.~\ref{sec:exp}}}
	\label{tab:parameters}
	}
\end{table}

\subsection{Practical influence of system parameters}
\label{ssec:toy}

A first simple experiment is considered, in which the effect of system parameters on the obtained channel chart is evidenced, showing the practical importance of the theoretical claims of the previous sections. The base scenario consists of a BS equipped with a UCA of 64 antennas and of radius $R=42~\textrm{cm}$ operating at $f_c=3~\textrm{GHz}$. It transmits over a bandwidth of $B=10~\textrm{MHz}$ and $N_s=16$ subcarriers. A total of $N_c=2838$ channel observations of UEs located in front of the BS is generated according to~\eqref{eq:channel_model}.
These parameters are used to compute and characterize the identifiable area, the identifiable neighborhoods as well as the optimal distance for the radial center $r_0$ according to the analysis of the previous sections. The identifiable area's width is of $422~\textrm{m}$ (Proposition~\ref{prop:nec_UCA}) and its radial center should be placed at a distance of $296~\textrm{m}$ from the BS to limit neighborhood distortion (Proposition~\ref{prop:round}). Fig.~\ref{fig:toy_example_threshold} shows a visualization of the distance between the UE at the center of the area (i.e., $r_\textrm{ref}=r_0$) and the rest of the UEs before and after thresholding. The base scenario is tested against 3 variants:
\begin{enumerate}
	\item not applying the threshold;
	\item halved width of the identifiable area by doubling the subcarrier spacing $\Delta _f$ and keeping the same bandwidth so that some users end up outside of it;
	\item non-round neighborhoods at the center by reducing the bandwidth $B$.
\end{enumerate}

Fig.~\ref{fig:toy_example_chart} shows the learned charts for the base scenario as well as for each variant. The word "b$<>$com" 
helps visualize how the global structure of the locations is preserved on it. The chart obtained with the first variant looks disorganized. Although long-range ambiguities are eliminated, short-range ones remain as no threshold is applied, causing a degradation in performance in particular in terms of TW and KS. The chart obtained with the second variant is warped as the parts of the considered area that are outside of the identifiable area are confused with other parts inside of it because of long-range ambiguities. As a consequence, TW and KS are negatively impacted. This shows the importance of limiting UEs to the identifiable zone. The chart obtained with the last variant is flattened as a consequence of the non-round identifiable neighborhoods, meaning that distances are perceived differently on the radial and angular directions. While there is little impact on TW and CT, KS is highly impacted because the global structure of the chart is not preserved. Finally, the chart obtained in the base scenario has the best overall performance and, although not perfect in terms of the global structure, represents a pretty faithful representation of the locations of the UEs. 

\begin{figure}
	\includegraphics[width=\columnwidth]{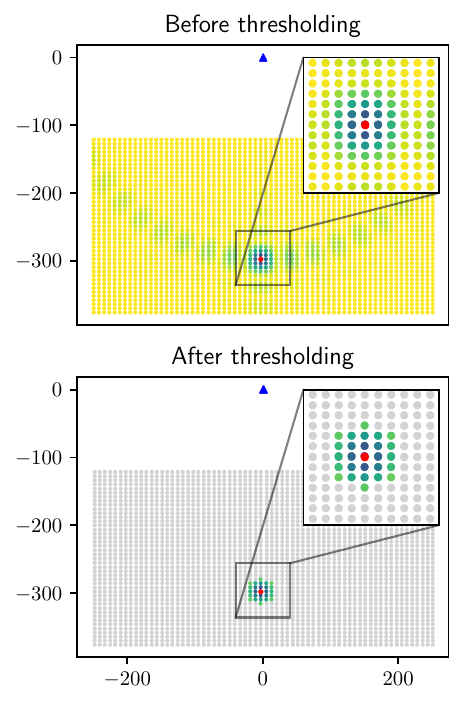}
	\caption{The computed PI distance between a reference UE (red point) and the rest of the UEs before (top) and after (bottom) thresholding. The greyed points represent the points that are outside the identifiable neighborhood and thus discarded.}
	\label{fig:toy_example_threshold}
\end{figure}

\begin{figure}
	\includegraphics[width=\columnwidth]{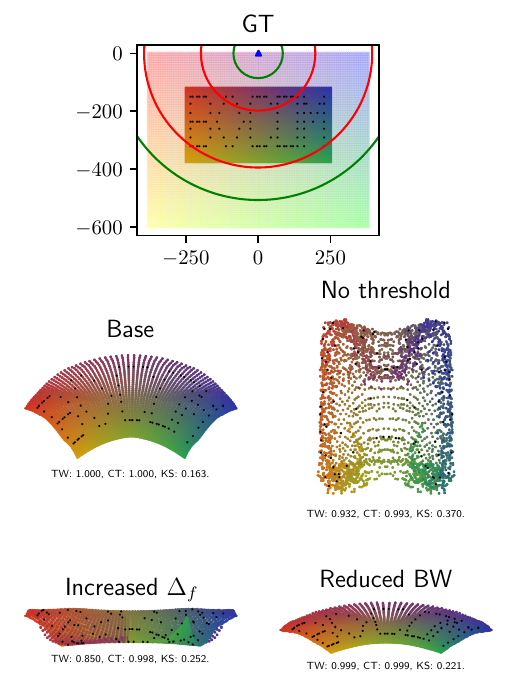}
	\caption{The GT UEs' locations and the learned charts corresponding to the base scenario and its three variants.}
	\label{fig:toy_example_chart}
\end{figure}

\subsection{ULA vs. UCA}
\label{ssec:ula_vs_uca}
Another experiment is considered, this time to further empirically assess the advantage of a UCA over a ULA, in order to validate the theoretical claims of Sec.~\ref{sec:UCA}. A BS placed at the center of a map communicates with UEs uniformly distributed around it. The same system parameters as in the base scenario of the previous section are considered, with the difference that in one case a ULA of 16 antennas is considered, while in the other case a UCA of 64 antennas is considered. The difference in the number of antennas is to obtain lobes of comparable radial sizes in both cases. In addition, two areas  that include the UEs' positions are considered: one that is identifiable according to the UCA and one that is identifiable according to the ULA also. The second one is included in the first one by definition. Note that the communication aspect of the system is ignored. Fig.~\ref{fig:ULA_vs_UCA} shows the ground truth UEs' locations as well as the learned charts in each case. In the case of the red area, only points going from yellow to pink appear on the chart associated with the ULA, which represent locations on the right of the map. This is because they overlap with the locations on its left. The channel distance considers that points on the right coincide with points on the right due to its axial symmetry w.r.t. the vertical axis on the figure. Isomap then proceed to learn a chart that places the wrongly matched points at the same locations. On the other hand, the chart produced in the case of the UCA successfully reflects the circular structure of the locations. The TW score in particular being low in the case of the ULA indicates that a lot of false neighborhoods are introduced in the chart which is not the case in the case of the UCA, confirming the observations. In the case of the orange area, both charts appear to be of good quality. However, upon closer inspection, it appears that the chart associated with the UCA is of better quality as indicated by the lower (i.e., better) KS score. Indeed, the points appear to be distributed uniformly whereas in the case the ULA, their density seems to increase the more pink they get. This can be explained by the angular width of the identifiable neighborhoods being dependent on the azimuth in the case of the ULA.  

\begin{figure}
	\includegraphics[width=\columnwidth]{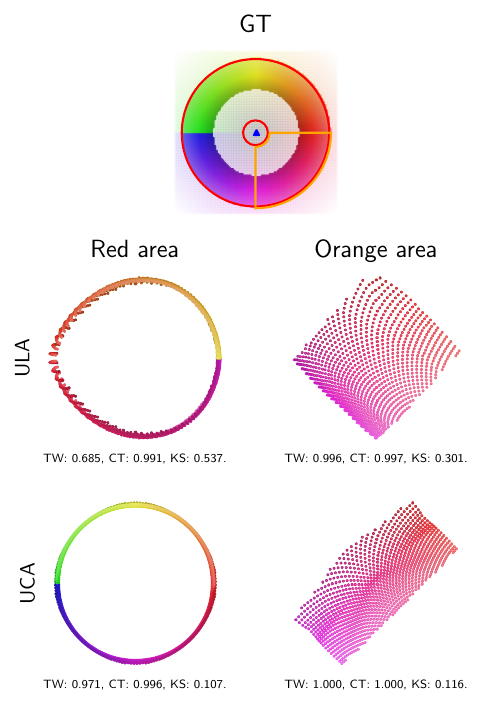}
	\caption{The GT UEs' locations and the learned charts corresponding to the use of a ULA vs. a UCA for two different areas (red and orange).}
	\label{fig:ULA_vs_UCA}
\end{figure}


\subsection{Realistic scenario}
\label{ssec:real}
\begin{figure*}	
	\includegraphics[width=\textwidth]{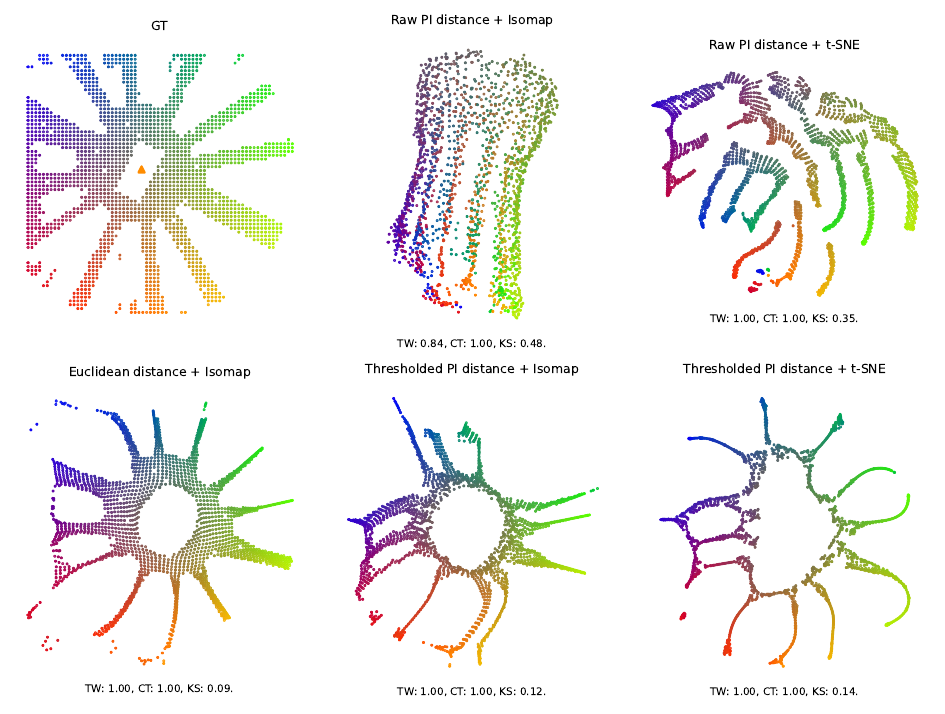}
	\caption{The GT UEs' locations for the realistic scenario and the learned charts corresponding to the use of the Euclidean distance (gold standard), the raw PI distance and the \new{proposed thresholded PI distance for both Isomap and t-SNE.}}
	\label{fig:sionna}
\end{figure*}
In this experiment, the presented CC approach is tested on a more realistic scenario using the Sionna ray tracing engine~\cite{Hoydis2023} on the ``etoile" scene, which is a 3D reconstruction of the area around the Arc De Triomphe in Paris. The BS is placed at the center of the scene and above the arc. It is equipped with a UCA of 64 elements and a radius of 20 cm. It communicates at a frequency of 3 GHz and over a bandwidth of 7 MHz. A total of $N_c=3383$ UEs are distributed around it and chosen so that they meet the necessary identifiability condition of Proposition~\ref{prop:nec_UCA}, effectively avoiding long-range ambiguities, and only the LoS path is considered. Note that although the theory developed in this paper considers the BS and UEs to be on the same plane, this scenario however considers the more realistic case where the BS is at a height of 70 m while the UEs are 1.5 m above the ground. Fig.~\ref{fig:sionna} shows the ground truth (GT) locations and the charts generated by \new{two dimensionality reduction methods, namely Isomap \cite{Tenenbaum2000} and t-SNE \cite{Maaten2008}} using three different channel distances. First, the Euclidean distance is used to showcase the capabilities of Isomap with quasiperfect knowledge and to serve as a gold standard. In particular, a neighborhood graph is constructed where neighbors are limited to points present in a radius $\frac{L'_{\bar{f}}}{2}$ of each node. This amounts to restricting the knowledge to identifiable neighborhoods according to $\tilde{d}^{\star}$ as defined in Proposition~\ref{prop:suff_UCA}. The resulting chart represents what can be achieved with a strongly identifiable distance (see Definition~\ref{def:strong_iden}), had one been available. The next chart is generated by constructing the neighborhood graph based on the raw PI channel distance, including its short-range ambiguities (i.e., not applying the threshold so that the sufficient identifiability condition of Proposition~\ref{prop:suff_UCA} is not met). The locations on the chart look warped and spatial neighborhoods are hardly distinguishable one from another when using Isomap, \new{ while t-SNE seems to behave better, although with neighborhoods that seem to be randomly layed out.} The final charts are similarly generated by constructing the neighborhood graph based on the channel distance but this time applying the threshold (meeting the sufficient condition of Proposition~\ref{prop:suff_UCA}) and effectively restricting neighborhoods to identifiable ones, which has the effect of eliminating the short-range ambiguities. This results in a much better chart that closely resemble the GT. As expected, thresholding allows to greatly improve TW and KS of the obtained chart, which is perfectly in line with the results of Sec.~\ref{ssec:toy}. \new{Note that for a fair comparison, t-SNE relies on the same geodesic distance as Isomap, obtained from a graph built from the thresholded PI distance. In this case however Isomap seems to slightly outperform t-SNE.} \new{Moreover,it is to be stressed that this paper proposes a new channel distance and not a new charting algorithm (the proposed distance can be used within any distance-based channel charting method), and we use here two charting methods to show that the improvement brought by the proposed distance are not specific to a given charting algorithm.}


\section{Conclusion and perspectives}
In this paper, a theoretical framework for the identifiability of UEs from their channel measurements is introduced in the perspective of adequately setting the parameters of a MIMO OFDM system for channel charting purpose. To that end, a PI channel distance is analyzed, highlighting its short and long-range ambiguities. A practical method for mitigating these limitations and eliminating ambiguities is then proposed by translating the introduced identifiability rules into \new{both guidelines for} the design of the MIMO OFDM systems \new{ and an appropriate threshold on the PI channel distance}. The \new{proposed thresholded PI distance has been used for CC with} a MIMO OFDM system and the conducted experiments show a great improvement in the UEs' neighborhood consistency as long as the established designing rules are applied. In addition, the advantage of using UCAs instead of ULAs for CC is demonstrated and validated through simulations. All the achieved results show a very good concordance with the theory and make the proposed system design method very promising.

\new{In future work, the proposed method should be extended to more realistic scenarios including multipath channels, edge UEs, etc., and it should be tested on real-world measured data.} Finally, it should be considered and extended in the context of ISAC systems with the goal of achieving a trade-off between communication and sensing.


\bibliographystyle{IEEEtran}
\bibliography{biblio.bib}

\begin{thebibliography}{10}
\providecommand{\url}[1]{#1}
\csname url@samestyle\endcsname
\providecommand{\newblock}{\relax}
\providecommand{\bibinfo}[2]{#2}
\providecommand{\BIBentrySTDinterwordspacing}{\spaceskip=0pt\relax}
\providecommand{\BIBentryALTinterwordstretchfactor}{4}
\providecommand{\BIBentryALTinterwordspacing}{\spaceskip=\fontdimen2\font plus
\BIBentryALTinterwordstretchfactor\fontdimen3\font minus
  \fontdimen4\font\relax}
\providecommand{\BIBforeignlanguage}[2]{{%
\expandafter\ifx\csname l@#1\endcsname\relax
\typeout{** WARNING: IEEEtran.bst: No hyphenation pattern has been}%
\typeout{** loaded for the language `#1'. Using the pattern for}%
\typeout{** the default language instead.}%
\else
\language=\csname l@#1\endcsname
\fi
#2}}
\providecommand{\BIBdecl}{\relax}
\BIBdecl

\bibitem{Tang2019}
F.~Tang, Y.~Kawamoto, N.~Kato, and J.~Liu, ``Future intelligent and secure
  vehicular network toward 6g: Machine-learning approaches,'' \emph{Proceedings
  of the IEEE}, vol. 108, no.~2, pp. 292--307, 2019.

\bibitem{Ali2020}
S.~Ali, W.~Saad, N.~Rajatheva, K.~Chang, D.~Steinbach, B.~Sliwa, C.~Wietfeld,
  K.~Mei, H.~Shiri, H.-J. Zepernick \emph{et~al.}, ``6g white paper on machine
  learning in wireless communication networks,'' \emph{arXiv preprint
  arXiv:2004.13875}, 2020.

\bibitem{Miltiadis2022}
F.~Miltiadis, L.~Vasiliki, M.~Jafar, M.~Mattia, E.~Soykan, B.~Tamas,
  R.~Nandana, R.~Nuwanthika, L.~Le~Magoarou, P.~Pietro \emph{et~al.},
  ``Pervasive artificial intelligence in next generation wireless: The hexa-x
  project perspective,'' in \emph{Proc. 1st Int. Workshop Artif. Intell. Beyond
  5G 6G Wireless Netw.(AIG) Co-Located With IEEE World Congr. Comput.
  Intell.(WCCI)}, 2022.

\bibitem{Shi2023}
Y.~Shi, L.~Lian, Y.~Shi, Z.~Wang, Y.~Zhou, L.~Fu, L.~Bai, J.~Zhang, and
  W.~Zhang, ``Machine learning for large-scale optimization in 6g wireless
  networks,'' \emph{arXiv preprint arXiv:2301.03377}, 2023.

\bibitem{Merluzzi2023}
M.~Merluzzi, T.~Borsos, N.~Rajatheva, A.~A. Benczúr, H.~Farhadi, T.~Yassine,
  M.~D. Mück, S.~Barmpounakis, E.~C. Strinati, D.~Dampahalage, P.~Demestichas,
  P.~Ducange, M.~C. Filippou, L.~G. Baltar, J.~Haraldson, L.~Karaçay,
  D.~Korpi, V.~Lamprousi, F.~Marcelloni, J.~Mohammadi, N.~Rajapaksha, A.~Renda,
  and M.~A. Uusitalo, ``The hexa-x project vision on artificial intelligence
  and machine learning-driven communication and computation co-design for 6g,''
  \emph{IEEE Access}, pp. 1--1, 2023.

\bibitem{Studer2018}
C.~Studer, S.~Medjkouh, E.~Gönültaş, T.~Goldstein, and O.~Tirkkonen,
  ``Channel charting: Locating users within the radio environment using channel
  state information,'' \emph{IEEE Access}, vol.~6, pp. 47\,682--47\,698, 2018.

\bibitem{Dwivedi2021}
S.~Dwivedi, R.~Shreevastav, F.~Munier, J.~Nygren, I.~Siomina, Y.~Lyazidi,
  D.~Shrestha, G.~Lindmark, P.~Ernström, E.~Stare, S.~M. Razavi,
  S.~Muruganathan, G.~Masini, Åke Busin, and F.~Gunnarsson, ``Positioning in
  5g networks,'' \emph{IEEE Communications Magazine}, vol.~59, pp. 38--44,
  2021.

\bibitem{Wang2015}
X.~Wang, L.~Gao, S.~Mao, and S.~Pandey, ``Deepfi: Deep learning for indoor
  fingerprinting using channel state information,'' in \emph{2015 IEEE Wireless
  Communications and Networking Conference (WCNC)}, 2015, pp. 1666--1671.

\bibitem{Vieira2017}
J.~Vieira, E.~Leitinger, M.~Sarajlic, X.~Li, and F.~Tufvesson, ``Deep
  convolutional neural networks for massive mimo fingerprint-based
  positioning,'' in \emph{2017 IEEE 28th Annual International Symposium on
  Personal, Indoor, and Mobile Radio Communications (PIMRC)}.\hskip 1em plus
  0.5em minus 0.4em\relax IEEE, 2017, pp. 1--6.

\bibitem{Decurninge2018}
A.~Decurninge, L.~G. Ordóñez, P.~Ferrand, H.~Gaoning, L.~Bojie, Z.~Wei, and
  M.~Guillaud, ``Csi-based outdoor localization for massive mimo: Experiments
  with a learning approach,'' in \emph{2018 15th International Symposium on
  Wireless Communication Systems (ISWCS)}, 2018, pp. 1--6.

\bibitem{Ferrand2020}
P.~Ferrand, A.~Decurninge, and M.~Guillaud, ``Dnn-based localization from
  channel estimates: Feature design and experimental results,'' in
  \emph{GLOBECOM 2020 - 2020 IEEE Global Communications Conference}, 2020, pp.
  1--6.

\bibitem{Ferrand2021}
P.~Ferrand, A.~Decurninge, L.~G. Ordoñez, and M.~Guillaud, ``Triplet-based
  wireless channel charting: Architecture and experiments,'' \emph{IEEE J. Sel.
  Areas Commun.}, vol.~39, pp. 2361--2373, 2021.

\bibitem{Rappaport2021}
B.~Rappaport, E.~Gönültaş, J.~Hoydis, M.~Arnold, P.~K. Srinath, and
  C.~Studer, ``Improving channel charting using a split triplet loss and an
  inertial regularizer,'' in \emph{Proc. IEEE 17th Int. Symp. Wireless Commun.
  Syst. (ISWCS)}.\hskip 1em plus 0.5em minus 0.4em\relax Berlin, Germany: IEEE,
  2021, pp. 1--6.

\bibitem{Kazemi2020}
P.~Kazemi, H.~Al-Tous, C.~Studer, and O.~Tirkkonen, ``Snr prediction in
  cellular systems based on channel charting,'' in \emph{2020 IEEE Eighth
  International Conference on Communications and Networking (ComNet)}, 2020,
  pp. 1--8.

\bibitem{Ponnada2021}
T.~Ponnada, P.~Kazemi, H.~Al-Tous, Y.-C. Liang, and O.~Tirkkonen, ``Best beam
  prediction in non-standalone mm wave systems,'' in \emph{2021 Joint European
  Conference on Networks and Communications; 6G Summit ({EuCNC}/6G
  Summit)}.\hskip 1em plus 0.5em minus 0.4em\relax {IEEE}, jun 2021.

\bibitem{Ribeiro2020}
L.~Ribeiro, M.~Leinonen, H.~Djelouat, and M.~Juntti, ``Channel charting for
  pilot reuse in mmtc with spatially correlated mimo channels,'' in \emph{2020
  IEEE Globecom Workshops (GC Wkshps}, 2020, pp. 1--6.

\bibitem{Yassine2022b}
L.~Le~Magoarou, T.~Yassine, S.~Paquelet, and M.~Crussière, ``Channel charting
  based beamforming.''\hskip 1em plus 0.5em minus 0.4em\relax Pacific Grove,
  CA, USA: IEEE, 2022, pp. 1185--1189.

\bibitem{Ferrand2023}
P.~Ferrand, M.~Guillaud, C.~Studer, and O.~Tirkkonen, ``Wireless channel
  charting: Theory, practice, and applications,'' Apr. 2023.

\bibitem{Agostini2020}
P.~Agostini, Z.~Utkovski, and S.~Stańczak, ``Channel charting: an euclidean
  distance matrix completion perspective.''\hskip 1em plus 0.5em minus
  0.4em\relax Barcelona, Spain: IEEE, 2020, pp. 5010--5014.

\bibitem{LeMagoarou2021}
L.~Le~Magoarou, ``Efficient channel charting via phase-insensitive distance
  computation,'' \emph{IEEE Wireless Commun. Lett.}, vol.~10, pp. 2634--2638,
  2021.

\bibitem{Altous2022}
H.~Al-Tous, P.~Kazemi, C.~Studer, and O.~Tirkkonen, ``Channel charting with
  angle-delay-power-profile features and earth-mover distance.''\hskip 1em plus
  0.5em minus 0.4em\relax Pacific Grove, CA, USA: IEEE, 2022, pp. 1195--1201.

\bibitem{Stahlke2023}
M.~Stahlke, G.~Yammine, T.~Feigl, B.~M. Eskofier, and C.~Mutschler, ``Indoor
  localization with robust global channel charting: A time-distance-based
  approach,'' \emph{IEEE Transactions on Machine Learning in Communications and
  Networking}, vol.~1, pp. 3--17, 2023.

\bibitem{Yassine2022a}
T.~Yassine, L.~Le~Magoarou, S.~Paquelet, and M.~Crussière, ``Leveraging
  triplet loss and nonlinear dimensionality reduction for on-the-fly channel
  charting,'' in \emph{2022 {IEEE} 23rd International Workshop on Signal
  Processing Advances in Wireless Communication ({SPAWC})}.\hskip 1em plus
  0.5em minus 0.4em\relax Oulu, Finland: {IEEE}, jul 2022, pp. 1--5.

\bibitem{Deng2018}
J.~Deng, S.~Medjkouh, N.~Malm, O.~Tirkkonen, and C.~Studer, ``Multipoint
  channel charting for wireless networks,'' in \emph{2018 52nd Asilomar
  Conference on Signals, Systems, and Computers}, 2018, pp. 286--290.

\bibitem{Maaten2008}
\BIBentryALTinterwordspacing
L.~van~der Maaten and G.~Hinton, ``Visualizing data using t-sne,''
  \emph{Journal of Machine Learning Research}, vol.~9, no.~86, pp. 2579--2605,
  2008. [Online]. Available:
  \url{http://jmlr.org/papers/v9/vandermaaten08a.html}
\BIBentrySTDinterwordspacing

\bibitem{Tenenbaum2000}
J.~B. Tenenbaum, V.~de~Silva, and J.~C. Langford, ``A global geometric
  framework for nonlinear dimensionality reduction,'' \emph{Science}, vol. 290,
  no. 5500, pp. 2319--2323, dec 2000.

\bibitem{Lei2019}
E.~Lei, O.~Castañeda, O.~Tirkkonen, T.~Goldstein, and C.~Studer, ``Siamese
  neural networks for wireless positioning and channel charting,'' in
  \emph{Proc. 57th Ann. Allerton Conf. Commun., Control, and Comput.}\hskip 1em
  plus 0.5em minus 0.4em\relax Monticello, IL, USA: IEEE, 2019, pp. 200--207.

\bibitem{Huang2019}
P.~Huang, O.~Castañeda, E.~Gönültaş, S.~Medjkouh, O.~Tirkkonen,
  T.~Goldstein, and C.~Studer, ``Improving channel charting with
  representation-constrained autoencoders,'' in \emph{2019 IEEE 20th
  International Workshop on Signal Processing Advances in Wireless
  Communications (SPAWC)}.\hskip 1em plus 0.5em minus 0.4em\relax Cannes,
  France: IEEE, 2019, pp. 1--5.

\bibitem{Stephan2023}
P.~Stephan, F.~Euchner, and S.~t. Brink, ``Angle-delay profile-based and
  timestamp-aided dissimilarity metrics for channel charting,'' Aug. 2023.

\bibitem{Wymeersch2021}
H.~Wymeersch, D.~Shrestha, C.~M. de~Lima, V.~Yajnanarayana, B.~Richerzhagen,
  M.~F. Keskin, K.~Schindhelm, A.~Ramirez, A.~Wolfgang, M.~F. de~Guzman,
  K.~Haneda, T.~Svensson, R.~Baldemair, and S.~Parkvall, ``Integration of
  communication and sensing in 6g: a joint industrial and academic
  perspective.''\hskip 1em plus 0.5em minus 0.4em\relax Helsinki, Finland:
  IEEE, 2021, pp. 1--7.

\bibitem{MateosRamos2022}
J.~M. Mateos-Ramos, C.~Häger, M.~F. Keskin, L.~L. Magoarou, and H.~Wymeersch,
  ``Model-driven end-to-end learning for integrated sensing and
  communication,'' pp. 5695--5700, Dec. 2022.

\bibitem{Proakis2009}
J.~G. Proakis and M.~Salehi, \emph{Digital communications}, 5th~ed., ser.
  McGraw-Hilll higher education.\hskip 1em plus 0.5em minus 0.4em\relax Boston
  [u.a.]: McGraw-Hill, 2009, iSBN in der Vorlage: 978-007-126378-8 und
  007-126378-0.

\bibitem{Bruckner1997}
A.~M. Bruckner, \emph{Real analysis}, J.~B. Bruckner and B.~S. Thomson,
  Eds.\hskip 1em plus 0.5em minus 0.4em\relax Upper Saddle River, NJ: Prentice
  Hall, 1997, includes index.

\bibitem{Weisstein2020}
\BIBentryALTinterwordspacing
E.~W. Weisstein, ``Bessel function zeros,'' \emph{MathWorld--A Wolfram Web
  Resource}, 2020. [Online]. Available:
  \url{https://mathworld.wolfram.com/BesselFunctionZeros.html}
\BIBentrySTDinterwordspacing

\bibitem{Mallat1993}
S.~Mallat and Z.~Zhang, ``Matching pursuits with time-frequency dictionaries,''
  \emph{Signal Processing, IEEE Transactions on}, vol.~41, no.~12, pp.
  3397--3415, Dec 1993.

\bibitem{Hoydis2023}
J.~Hoydis, F.~{Ait Aoudia}, S.~Cammerer, M.~Nimier-David, N.~Binder, G.~Marcus,
  and A.~Keller, ``{Sionna RT: Differentiable Ray Tracing for Radio Propagation
  Modeling},'' \emph{arXiv preprint}, Mar. 2023.

\end{thebibliography}

\appendices

\section{Proof of Lemma~\ref{lem:lemma_1}}
\label{app:lemma_1}
The frequency term can be expressed as
\begin{align}
	\begin{split}
		\bar{f}(r_i,r_j) & =\left\vert \left(\mathbf{f}(r_i)^H\mathbf{f}(r_j)\right)\right\vert                                               \\
		& =\frac{1}{N_s}\left\vert \sum^{N_s}_{s=1} e^{-j2\pi\frac{r_i-r_j}{c}(f_s-f_c)}\right\vert                                        \\
		& \overset{(a)}{=}\frac{1}{N_s}\left\vert \sum^{N_s}_{s=1} e^{-j2\pi\frac{r_i-r_j}{c}\Delta_f(s-1)}e^{-j2\pi\frac{r_i-r_j}{c}(f_c-\Delta_f\frac{N_s-1}{2})}\right\vert                                        \\
		& =\frac{1}{N_s}\left\vert\sum^{N_s-1}_{s=0} e^{-j2\pi\frac{r_i-r_j}{c}\Delta_fs}\right\vert \\
		& \overset{(b)}{=}\frac{1}{N_s}\left\vert\frac{\sin(\pi \frac{B(r_i-r_j)}{c})}{\sin(\pi\frac{B(r_i-r_j)}{cN_s})}\right\vert\\
		& =\left\vert D_{N_s}\left(\frac{2\pi B(r_i-r_j)}{c}\right)\right\vert.
	\end{split}
\end{align}
where $(a)$ is obtained since $f_s=f_c-\Delta_f\frac{N_s-1}{2}+\Delta_f(s-1)$, and $(b)$ comes from the computation of the geometric series sum. 


\section{Proof of Lemma~\ref{lem:lemma_2}}
\label{app:lemma_2}

The angular term can be expressed as

\begin{align}
	\begin{split}
		\bar{a}(\theta_i,\theta_j) & =\left\vert\mathbf{a}(\theta_i)^H\mathbf{a}(\theta_j)\right\vert\\
		&=\frac{1}{N_a}\left\vert\sum_{n=0}^{N_a-1}e^{-j\frac{2\pi}{\lambda}\vec{p}_{n}(\vec{u}(\theta_i)-\vec{u}(\theta_j))}\right\vert.\\
	\end{split}
\end{align}

In particular, for a ULA and using the adequate steering vectors~\eqref{eq:steer_vec}

\begin{align}
	\begin{split}
		\bar{a}(\theta_i,\theta_j) & =\frac{1}{N_a}\left\vert\sum_{n=0}^{N_a-1}e^{-j2\pi\Delta_r n(\sin\theta_i-\sin\theta_j)}\right\vert \\
		&=\frac{1}{N_a}\left\vert\frac{\sin(\pi\Delta_r N_a(\sin\theta_i-\sin\theta_j))}{\sin(\pi\Delta_r(\sin\theta_i-\sin\theta_j))}\right\vert\\
		&=\left\vert D_{N_a}(2\pi\Delta_r N_a (\sin\theta_i-\sin\theta_j))\right\vert.
	\end{split}
\end{align}

\section{Proof of Lemma~\ref{lem:a_tilde}}
\label{app:lemma_3}
Considering a UCA and using the adequate steering vectors~\eqref{eq:steer_vec}, $\bar{a}(\theta_i,\theta_j)$ becomes
\begin{align}
	\begin{split}
		\bar{a}(\theta_i,\theta_j) & =\vert\mathbf{a}(\theta_i)^H\mathbf{a}(\theta_j)\vert                                                                                          \\
		& =\frac{1}{N_a}\left\vert\sum_{i=1}^{N_a}\exp\left(-j\frac{2\pi}{\lambda}R\right.\right.\\
  &\Big(\cos(2\pi\frac{i}{N_a})(\cos\theta_i-\cos\theta_j)\\
  &\left.\left.+\sin(2\pi\frac{i}{N_a})(\sin\theta_i-\sin\theta_j)\Big)\right)\right\vert,
	\end{split}
	\label{eq:a_bar_discrete}
\end{align}
for the discrete array case. Considering the continuous-aperture case by setting $N_a\rightarrow \infty$ leads to
\begin{align*}
	\tilde{a}(\theta_i,\theta_j) & =\frac{1}{2\pi}\left\vert\int_0^{2\pi}\exp\bigg(-j\frac{2\pi}{\lambda}R\Big(\cos\gamma(\cos\theta_i-\cos\theta_j)\right.\\
 &\left.+\sin\gamma(\sin\theta_i-\sin\theta_j)\Big)\bigg)\,d\gamma\right\vert                                                                                                                                   \\
	                               & =\frac{1}{2\pi}\bigg\vert\int_0^{2\pi}\exp\bigg(-j\frac{2\pi}{\lambda}R\\
								   &\Big(\cos(\gamma-\theta_i)-\cos(\gamma-\theta_j)\Big)\bigg)\,d\gamma\bigg\vert.                                                                                                                                                                \\
	\intertext{Substituting $\omega$ for $\gamma-\theta_j$, and with $d\omega=d\gamma$ we have}
	                               & =\frac{1}{2\pi}\bigg\vert\int_{-\theta_j}^{2\pi-\theta_j}\exp\bigg(-j\frac{2\pi}{\lambda}R\\
								   &\Big(\cos(\omega+\theta_j-\theta_i)-\cos(\omega)\Big)\bigg)\,d\omega\bigg\vert,                                                                                                                                                 \\
	\intertext{with $\Delta\theta=\theta_j-\theta_i$. Since $\cos$ is $2\pi$ periodic, we have}
	                               & =\frac{1}{2\pi}\bigg\vert\int_{0}^{2\pi}\exp\bigg(-j\frac{2\pi}{\lambda}R\\
                                &\Big(\cos(\omega+\theta_j-\theta_i)-\cos(\omega)\Big)\bigg)\,d\omega\bigg\vert.                                                                                                                                                              \\
	\intertext{With the help of basic trigonometric identities, it can be shown that $\cos(\omega+\theta_j-\theta_i)-\cos(\omega)=2\left\vert\sin\frac{\theta_j-\theta_i}{2}\right\vert\cos(\omega + \mathrm{atan2}(\sin(\theta_j-\theta_i),\cos(\theta_j-\theta_i)-1))$, 
 leading to}
	                               & =\frac{1}{2\pi}\bigg\vert\int_{0}^{2\pi}\exp\bigg(-j\frac{4\pi}{\lambda}R\left\vert\sin\frac{\theta_j-\theta_i}{2}\right\vert\\
                                &\cos\Big(\omega + \mathrm{atan2}(\sin(\theta_j-\theta_i),\cos(\theta_j-\theta_i)-1)\Big)\bigg)\,d\omega\bigg\vert.                                                                                    \\
	\intertext{Subsituting $\gamma$ for $\omega + \mathrm{atan2}(\sin(\theta_j-\theta_i),\cos(\theta_j-\theta_i)-1) + \frac{\pi}{2}$, and with $d\gamma=d\omega$ we have}
	                               & =\frac{1}{2\pi}\bigg\vert\int_{\mathrm{atan2}(\sin(\theta_j-\theta_i),\cos(\theta_j-\theta_i)-1) + \frac{\pi}{2}}^{2\pi+\mathrm{atan2}(\sin(\theta_j-\theta_i),\cos(\theta_j-\theta_i)-1) + \frac{\pi}{2}}\\
                                &e^{-j\frac{4\pi}{\lambda}R\left\vert\sin\frac{\theta_j-\theta_i}{2}\right\vert\sin\gamma }\,d\gamma\bigg\vert. \\
	\intertext{Finally, exploiting once again the $2\pi$-periodic property of the sine, we have}
	                               & =\frac{1}{2\pi}\left\vert\int_{0}^{2\pi}e^{-j\frac{4\pi}{\lambda}R\left\vert\sin\frac{\theta_j-\theta_i}{2}\right\vert\sin\gamma }\,d\gamma\right\vert.                                                                                                                                               \\
\end{align*}
\balance
This leads to a convenient way of expressing $\tilde{a}$ as it can be written as
\begin{equation}
	\tilde{a}(\theta_i,\theta_j)=\left\vert J_0\left(\frac{4\pi}{\lambda}R\left\vert\sin\frac{\theta_j-\theta_i}{2}\right\vert\right)\right\vert.
\end{equation}

The validity of the integral form as an approximation was empirically verified and holds well when a great number of antennas $N_a$ is considered.



\end{document}